\newtheorem{lemma}{Lemma}
\newtheorem{proposition}{Proposition}
\newtheorem{remark}{Remark}
\newtheorem{theorem}{Theorem}
\newtheorem*{fact}{Fact}
\newcommand*{\Z}{\makebox[1ex]{\textbf{$\cdot$}}}%
\newcounter{mnotecount}%[section]
\newcommand{\mnotex}[1]%{}
{\protect{\stepcounter{mnotecount}}$^{\mbox{\footnotesize $\bullet$\themnotecount}}$ 
\marginpar{%\color{red}%
\raggedright\tiny\em
$\!\!\!\!\!\!\,\bullet$\themnotecount: #1} }
\def\bma{{\bm a}}
\def\bmb{{\bm b}}
\def\bmc{{\bm c}}
\def\bmd{{\bm d}}
\def\bme{{\bm e}}
\def\bmf{{\bm f}}
\def\bmg{{\bm g}}
\def\bmh{{\bm h}}
\def\bmi{{\bm i}}
\def\bmj{{\bm j}}
\def\bmk{{\bm k}}
\def\bml{{\bm l}}
\def\bmn{{\bm n}}
\def\bmm{{\bm m}}
\def\bmp{{\bm p}}
\def\bmq{{\bm q}}
\def\bms{{\bm s}}
\def\bmu{{\bm u}}
\def\bmv{{\bm v}}
\def\bmomega{{\bm \omega}}
\def\bmL{{\bm L}}
\def\bmQ{{\bm Q}}
\def\bmT{{\bm T}}
\begin{document}

\title{\textbf{Future stability of self-gravitating  dust balls in an expanding universe}}% Force line breaks with \\

\author{S Beheshti\footnote{\tt s.beheshti@qmul.ac.uk}, M Normann\footnote{\tt m.normann@qmul.ac.uk} and  JA Valiente Kroon\footnote{\tt j.a.valiente-kroon@qmul.ac.uk}\\
{\em School of Mathematical Sciences}\\
{\em Queen Mary, University of London} \\
{\em Mile End Road, London E1 4NS, UK}}

%\collaboration{School of Mathematical Sciences, Queen Mary, University of London}%\noaffiliation

\date{\today}% It is always \today, today,
             %  but any date may be explicitly specified

\maketitle

\begin{abstract}
 We consider a system representing self-gravitating balls of dust in an expanding Universe. It is demonstrated that one can prescribe data for such a system at infinity and evolve it backward in time without the development of shocks or singularities. The resulting solution to the Einstein-$\lambda$-dust equations exists for an infinite amount of time in the asymptotic region of the spacetime. Furthermore, we find that if the density is small compared to the Cosmological constant, then it is possible to construct Cosmological solutions to the Einstein constraint equations on a standard Cauchy hypersurface representing self-gravitating balls of dust. If, in addition, the density is assumed to be sufficiently small, then this initial data gives rise to a future geodesically complete solution to the 
 Einstein-$\lambda$-dust equations admitting a smooth conformal extension at infinity  which can be regarded as a perturbation of de Sitter spacetime. The main technical tool in this analysis are Friedrich's conformal Einstein field equations for the Einstein-$\lambda$-dust written in terms of a gauge in which the flow lines of the dust are recast as conformal geodesics.
\end{abstract}

\section{Introduction}
Much of Physics is the study of evolution of a system under certain conditions and laws. In Cosmology one is thus interested in the evolution of our Universe from the far past to the distant future. The dominant law governing galaxies and the evolution of the Cosmos is embedded in Einsteins theory of gravitation. In the large scale structure of the Universe, galaxies can be treated as dust ---i.e. each galaxy is represented by a "dust" particle--- exerting no pressure on the surrounding particles. Given that current observations suggest our Universe is expanding, the setting to investigate the evolution of our Universe is thus the Einstein equations coupled to dust matter with a positive cosmological constant. There is a challenge associated to the study of our Universe at such large scales and its evolution over long time: namely, the global properties of the theory become important. Accordingly, any comprehensive study of a solution to the Einstein field equations should also take into account its global properties.

\medskip
In the attempt to model astrophysical objects such as stars and galaxies and solar systems, it is necessary to consider solutions of the Einstein field equations which represent an \textit{isolated system}. This is far from a trivial endeavour. In Newtonian gravity and relativistic electrodynamics, one has a flat background metric upon which the fields propagate, and one can meaningfully speak about the fall-off properties of the fields as one moves away from the sources. In these terms an isolated system is a system for which the field strength vanishes at infinity and the source density is zero outside a finite radius. In General Relativity, the metric is part of the unknowns for which one solves the equations. Thus, there is no "background" metric upon which the gravitational field propagate and in terms of which we may define fall-off properties in a meaningful way. Accordingly,  attempts at solving the Einstein equations for an isolated system by introducing approximations in terms of a background metric plus perturbations cannot be satisfactory as they disregard the non-linear aspect of the full theory. 

\smallskip
A procedure which has proved successful in the study of global properties of spacetimes describing isolated systems   was devised by \textit{Roger Penrose} in \cite{Penrose1965}, and involves a conformal compactification of space time ---essentially allowing for a treatment of infinity as a three dimensional submanifold; see \cite{Kroon2016} for details. This allow for a rigorous description of the asymptotic behaviour and global properties of a space time \cite{Friedrich2014,Frauendiener2004,Newman1981}. But not all spacetimes allow for a conformal treatment. Accordingly,  one is interested in knowing which solutions to the Einstein field equations admit a smooth conformal compactification.

\smallskip
One important aspect of Penrose's conformal method which will be extensively used in the present article, is that a small conformal time can represent an infinite amount of physical time. Hence, if the equations describing the conformally rescaled spacetime imply a regular system of evolution equations one could, in turn,  apply general results of the theory of partial differential equations to show global existence and stability. The seminal work of H. Friedrich has established that the Einstein vacuum equations \cite{H.Friedrich1981}, including de Sitter-like spacetimes with positive cosmological constant \cite{friedrich1986a,Friedrich1986b}, the Einstein-Maxwell-Yang-Mills equations \cite{Friedrich1991} and the Einstein-$\lambda$-dust equations \cite{Friedrich2017}, all can be described in terms of a set of regular \emph{conformal Einstein field equations} 
from which, in turn, one can extract a symmetric hyperbolic evolution system for which general theory is available ---in particular, locally the Cauchy problem is well posed and stability over a small time is guaranteed.

\smallskip
 In \cite{Choquet-Bruhat2006}, Y. Choquet-Bruhat \& H. Friedrich have established the local existence in time of solutions to the Einstein field equations representing isolated self-gravitating dust bodies. However, the mathematical technology available did not allow to pursue the pressing question of the global existence of solutions. One of the key technical aspects of their analysis is the use of a formulation of the evolution equations which is well behaved independently of whether the density of the dust vanishes or not. This formulation crucially depends on the fact that the flow lines of the dust are geodesics. 
 
 A suitable framework for the analysis of global properties of solutions to the Einstein-$\lambda$-dust system by means of conformal methods was given in \cite{Friedrich2017}. This setup was used to study the backwards evolution of asymptotic data prescribed on the conformal boundary $\mathcal{I}^+$. The work in  \cite{Friedrich2017} is remarkable in that it is one of the few conformal treatments of a matter model with non-vanishing trace of the energy-momentum. The conformal evolution system  used in this analysis is well-defined up to and beyond the conformal boundary. Its construction  depends crucially on the observation that the flow lines of the dust can be recast as certain conformally invariant curves ---the so-called \emph{conformal geodesics}. Moreover, as in the case of the analysis in \cite{Choquet-Bruhat2006}, the evolution system is also regular independently of whether the density vanishes or not. Accordingly,  as it will be discussed in this article, it provides an ideal framework to study global properties of the evolution of isolated dust bodies in General Relativity in the presence of a positive Cosmological constant. The analysis of these relativistic self-gravitating matter configurations is a subject of physical relevance as the Cosmological constant is generally believed to be connected with the observed expansion of our Universe, and dust to the  solutions to the Einstein field equations are good models for the description of the matter content of the Universe.

\medskip
In this work we combine the approaches followed in \cite{Choquet-Bruhat2006} and \cite{Friedrich2017} to provide a toy model of self-gravitating dust balls in an expanding Universe for which it is possible to make assertions regarding global existence and stability. More precisely, we show that in a spacelike conformal boundary (which, for simplicity one can assume as having the topology of $\mathbb{S}^3$) one can prescribe asymptotic data which represents patches of dust on the conformal boundary. Using then the conformal evolution equations one can then show that these configurations will exist for some small amount of conformal time ---which, when translated into the  physical picture corrresponds to an infinite amount of physical time. To complement the above \emph{backwards evolution problem}, we provide sufficient conditions for the existence of solutions to the Einstein constraint equations on a standard Cauchy initial hypersurface which represent patches of dust in a de-Sitter-like universe. We further show future stability of these patches, provided the density function satisfies a smallness condition. The resulting spacetime is future geodesically complete. \emph{The above analysis provides a non-trivial example of fairly generic matter configurations which exist arbitrarily into the future.} The physical mechanism ensuring this result is the  expansion driven by the Cosmological constant $\lambda$. 

%\medskip
% An essential quality of SHDE is that  It is therefore of interest to reduce the Einstein field equations to a symmetric hyperbolic form. This can be done --- see e.g. \cite{Friedrich1998}, \cite{Friedrich1996}, \cite{Reula1998} --- and is called \textit{hyperbolic reduction}.
%\mnotex{JAVK (8.4.2021): we need to find the right place for this text.}

\subsection*{Outline of the article}
The structure of the article is as follows. In Section \ref{sectionGeometricBackground} we introduce the geometric framework we will be working in. Section \ref{secFriedrichReview} gives a brief discussion of the conformal evolution system for the Einstein-$\lambda$-dust system of equations that will be used in our analysis. This framework is based on the work in \cite{Friedrich2017}  ---this discussion includes a brief discussion on the conformal method.  Section \ref{secPastStability} discuss the construction of asymptotic data for self-gravitating dust balls and provides a proof of the evolution, backwards in time, of this data. In Section \ref{sec:futureStability} we show how the conformal method of Licnerowicz can be used to obtain solutions to the constraints describing an initial configuration of dust balls. Moreover, it is proven that under certain smallness assumptions this initial data gives rise to a future geodesically complete spacetime with positive Cosmological constant containing self-gravitating dust balls. The final Section \ref{Section:FinalRemarks} provides some concluding remarks.

\subsection*{Conventions and notation}
In this article, Lorentzian metrics are assumed to have signature $(-+++)$. Throughout, a coordinate free notation will be preferred where it is most convenient. To avoid confusion, we specify here the precise meaning of some symbols. Where indices are used, the following hold: Greek and Latin letters will be used as coordinate
indices in the spacetime manifold, where $\mu,\,\nu,\,\lambda,\ldots =
\{0,1,2,3\}$ and $i,j,k \ldots = \{1,2,3\}$. To denote frame indices
we will make use of bold latin letters where $\bma, \bmb, \bmc \ldots
= \{0,1,2,3\}$ and $\bmi, \bmj, \bmk \ldots = \{1,2,3\}$. Hence, the
components in a frame basis of a vector $\bmv \in \mathcal{M}$ is thus
labelled $v^{\bma}$.
\\

\noindent \textbf{Derivative expressions.} Given local coordinates $x^{\mu}$ and a covector $\bm\omega$, we write
\begin{eqnarray*}
&&\bm \nabla \bm \omega \equiv \nabla_{\mu} \omega_{\nu} \mathbf{d}x^{\mu}\otimes  \mathbf{d}x^{\nu},\\
&&\bm \nabla \bm\cdot \bm\omega \equiv \nabla^{\mu} \omega_{\mu},\\
&&\bm\nabla^2\bm\omega \equiv g^{\mu\nu}\nabla_{\mu}\nabla_{\nu}\omega_{\gamma} \mathbf{d}x^{\gamma},\\
&&\bm \nabla \bm \nabla \bm\omega \equiv \nabla_{\mu}\nabla_{\nu}\omega_{\gamma} \mathbf{d}x^{\mu}\otimes \mathbf{d}x^{\nu} \otimes \mathbf{d}x^{\gamma}.
\end{eqnarray*}
where $\nabla_{\mu}$ denotes the usual covariant derivative, and $g_{\mu\nu}$ the components of the metric tensor expressed in a local coordinate system.
\\

\noindent \textbf{Covariant and contravariant tensors}. We will use the musical notation $\bm\sharp$ and $\bm\flat$ to distinguish between contravariant and covariant tensors, respectively. Given the coordinates $x^{\mu}$, we may write
\[
\bmg = g_{\mu\nu}\mathbf{d}x_{\mu} \otimes \mathbf{d}x_{\nu}, \qquad 
\bmg^{\bm\sharp} = g^{\mu\nu}\bm\partial_{\mu} \otimes\bm\partial_{\nu}.
\]
Note, however, that this will not be used consistently, but only where it is important to distinguish between the two. The introduction of indexfree notation allow the discussion on the conformal field equations to focus on the structural aspects, and thus be more conseptual, rather then paying attention to the gory details.

\section{Geometric background}
\label{sectionGeometricBackground}
In what follows, let $(\mathcal{M},\bmg)$ denote a spacetime represented by a 4-dimensional
manifold, $\mathcal{M}$, with a Lorentzian metric $\bmg$. In the following we assume the existence of a smooth timelike vector field $\bmu$. Moreover, it is assumed that 
\[
\bmg(\bmu, \bmu) = -1,
\]
and that at each point $p \in \mathcal{M}$ there exists a $\bmg$-orthonormal frame field ---i.e. it is assumed that 
\[
\bmg(\bme_{\bma},\bme_{\bmb}) = \eta_{\bma\bmb}.
\]
Associated to the frame $\{\bme_{\bma}\}$ one has a coframe, $\{{\bmomega}^{\bma}\}$ satisfying
\[
\langle{\bme_{\bma}, \mathbf{\bmomega}^{\bmb}\rangle} = \delta_{\bma}{}^{\bmb}.
\]
In the following all tensorial objects will be expressed in terms of their components with respect to the frame and coframe unless otherwise stated. The metric tensor $\bmg$ gives rise to a natural connection $\mathbf{\nabla}$ such that $\mathbf{\nabla} \bmg = 0$,
which is the \textit{metric compatibility condition}. In terms of the frames, this condition takes the form
\begin{equation}
\label{metricComp}
\Gamma_{\bma}{}^{\bmb}{}_{\bmc} \eta_{\bmb\bmd} + \Gamma_{\bma}{}^{\bmb}{}_{\bmd} \eta_{\bmb\bmc} = 0,
\end{equation}
where the \textit{frame connection coefficients} are defined by the directional derivative along the direction of the frame indices
\[
\label{FrameConnectionDef}
\nabla_{\bma} \bme_{\bmb} = \Gamma_{\bma}{}^{\bmc}{}_{\bmb} \bme_{\bmc}, \qquad \nabla_{\bma} = \langle{\bme_{\bma}, \mathbf{\nabla}\rangle}.
\]
Furthermore, if the connection $\mathbf{\nabla}$ is \textit{torsion-free}, we have that
\begin{equation}
\label{torsionFree}
\Sigma_{\bma}{}^{\bmc}{}_{\bmb} = 0,
\end{equation}
where the frame components of the \textit{torsion tensor} are defined by
\[
\Sigma_{\bma}{}^{\bmc}{}_{\bmb} \bme_{\bmc} = \left[\bme_{\bma}, \bme_{\bmb}\right] + \left(\Gamma_{\bma}{}^{\bmc}{}_{\bmb} - \Gamma_{\bmb}{}^{\bmc}{}_{\bma}\right) \bme_{\bmc}.
\]
The commutation of the connection may be expressed in terms of the \textit{Riemann curvature tensor} and the torsion tensor
%\begin{subequations}
\begin{eqnarray*}
&&\nabla_{[\bma}\nabla_{\bmb]}v^{\bmc} = R^{\bmc}{}_{\bmd\bma\bmb} v^{\bmd} +\Sigma_{\bma}{}^{\bmd}{}_{\bmb}\nabla_{\bmd}v^{\bmc},\\
&&\nabla_{[\bma}\nabla_{\bmb]}w_{\bmc} = -R^{\bmd}{}_{\bmc\bma\bmb} w_{\bmd} +\Sigma_{\bma}{}^{\bmd}{}_{\bmb}\nabla_{\bmd} w_{\bmc}\label{CommutatorRiemann}.
\end{eqnarray*}
%\end{subequations}
The frame components of the Riemann curvature tensor is given by
\begin{equation}
R^\bmc{}_{\bmd\bma\bmb} = \partial_\bma\Gamma_\bmb{}^\bmc{}_\bmd
   - \partial_\bmb \Gamma_\bma{}^\bmc{}_\bmd +
   \Gamma_\bmf{}^\bmc{}_\bmd(\Gamma_\bmb{}^\bmf{}_\bma -
   \Gamma_\bma{}^\bmf{}_\bmb) +
   \Gamma_\bmb{}^\bmf{}_\bmd\Gamma_\bma{}^\bmc{}_\bmf -
   \Gamma_\bma{}^\bmf{}_\bmd\Gamma_\bmb{}^\bmc{}_\bmf
   -\Sigma_\bma{}^\bmf{}_\bmb \Gamma_\bmf{}^\bmc{}_\bmd \label{RiemannExpansion}
\end{equation}
---see \cite{Kroon2016} for details. The Riemann tensor has all the usual symmetries, and it satisfies the
\textit{Bianchi identity} for a general connection
\begin{eqnarray}
&& R^{\bmd}{}_{[\bmc\bma\bmb]} + \nabla_{[\bma}\Sigma_{\bmb}{}^{\bmd}{}_{\bmc]} +
   \Sigma_{[\bma}{}^{\bme}{}_{\bmb}\Sigma_{\bmc]}{}^{\bmd}{}_{\bme}=0, \label{1stBianchiId}\\
&& \nabla_{[\bma} R^{\bmd}{}_{|\bme|\bmb\bmc]} + \Sigma_{[\bma}{}^{\bmf}{}_{\bmb} R^{\bmd}{}_{|\bme\bm|\bmc]\bmf} =0.\label{2ndBianchiId}
\end{eqnarray}
Furthermore, we recall that the Riemann tensor admits the \emph{irreducible decomposition}
\begin{eqnarray}
&& R^{\bmc}{}_{\bmd\bma\bmb} = C^{\bmc}{}_{\bmd\bma\bmb} + 2 (\delta^{\bmc}{}_{[\bma}L_{\bmb]\bmd} -
   \eta_{\bmd[\bma}L_{\bmb]}{}^{\bmc}), \label{RiemannDecomposition}
\end{eqnarray}
with $C^{\bmc}{}_{\bmd\bma\bmb}$ the components of the \emph{Weyl tensor} and 
\begin{equation}
L_{\bma\bmb} \equiv R_{\bma\bmb} -\frac{1}{6}R \eta_{\bma\bmb}
\label{Definition:Schouten}
\end{equation}
denotes the components of the \emph{Schouten tensor}. The connection
$\mathbf{\nabla}$ is called the \textit{Levi-Civita connection} of $\bmg$ if it
satisfies \eqref{metricComp} and \eqref{torsionFree}. In what follows we
will assume the connection to be Levi-Civita.
\\

\subsubsection*{A projection formalism}
At each point in the spacetime manifold $\mathcal{M}$ the flow lines give rise to a tangent
space which can be split into parts in the direction of $\bmu$
and those orthogonal. This means that without implying a foliation, we
may decompose every tensor defined at each point $p \in \mathcal{M}$
into its orthogonal and timelike part. This may be done by contracting
with $\mathbf{u}$ and the \textit{projector} defined as
\[
h_{\bma}{}^{\bmb} \equiv \eta_{\bma}{}^{\bmb} + u_{\bma}u^{\bmb}, \qquad \bmu = u^{\bma}\mathbf{e}_{\bma}.
\]
Thus, a tensor $T_{\bma\bmb}$ may be split into its time-like, mixed
and space-like parts given, respectively, by
\[
T_{\bm0\bm0}= u^{\bma}u^{\bmb}T_{\bma\bmb}, \qquad T'_{\bm0\bmc}= u^{\bma}h^{\bmb}{}_{\bmc}T_{\bma\bmb}, \qquad T'_{\bmc\bmd}= h^{\bma}{}_{\bmc}h^{\bmb}{}_{\bmd}T_{\bma\bmb},
\]
where $'$ denotes that the free indices left are spatial ---e.g. $T'_{\bma\bm0} u^{\bma} = 0$. In what follows we will not put a $'$ on projected tensors, as we will mostly deal with timelike parts and pure spatial parts --- in which case we use the components $\{\bmi,\bmj,...\}$. Decomposing $\mathbf{\nabla u}$ we
obtain
\begin{equation}
\label{Der4VelDecomp}
\nabla_{\bma} u^{\bmb} = \chi_{\bma}{}^{\bmb} + u_{\bma}a^{\bmb},
\end{equation}
where $\chi_{\bma}{}^{\bmb}$ and $a^{\bmb}$ are the components of the
\textit{Weingarten tensor} and 4-acceleration, respectively, defined
by
\[
\chi_{\bma}{}^{\bmb} \equiv h_{\bma}{}^{\bmc}\nabla_{\bmc} u^{\bmb}, \qquad a^{\bmb} \equiv u^{\bmc}\nabla_{\bmc}u^{\bmb}.
\]
In the literature (e.g. see \cite{Wald1984} p.217) the trace,
trace-free and antisymmetric part of \eqref{Der4VelDecomp} is called,
respectively, the expansion, shear and the twist of the fluid.
%By decomposing \eqref{DivEnergyMom} we obtain an equivalent system of
%equations in terms of the above quantities
%\begin{subequations}
%\begin{eqnarray}
%&& \nabla^{\bma}\Pi_{\bma\bmb} = -a_{\bmb} \rho + u_{\bmb}\Pi_{\bma\bmc}\chi^{\bma\bmc},\label{DivStress}\\
%&&u^{\bma}\nabla_{\bma}\rho = -\rho \chi - \Pi_{\bma \bmb}\chi^{\bma\bmb}.\label{LieRho}
%\end{eqnarray}
%\end{subequations} 
%The decomposition of the \textit{four volume} is
%\[
%\epsilon_{\bma\bmb\bmc\bmd} = -2\left(u_{[\bma}\epsilon_{\bmb]\bmc\bmd}-\epsilon_{\bma\bmb[\bmc}u_{\bmd]}\right), \qquad \epsilon_{\bmb\bmc\bmd}=\epsilon_{\bma\bmb\bmc\bmd} u^{\bma}.
%\]
Given a tensor $T_{abc}$ which is antisymmetric in its two last indices,
we may construct the \textit{electric} and \textit{magnetic} parts with respect to $\mathbf{u}$. In frame indices this is, respectively, defined by
\[
\omega_{\bmc\bmd} \equiv T_{\bma\bmb\bme} h_{\bmc}{}^{\bma} h_{\bmd}{}^{\bmb} u^{\bme}, \qquad \omega^{\ast}_{\bmc\bmd} \equiv T^{\ast}{}_{\bma\bmb\bme} h_{\bmc}{}^{\bma} h_{\bmd}{}^{\bmb} u^{\bme},
\]
where the \textit{Hodge dual operator}, denoted by ${}^{\ast}$, is defined
by 
\[
T^{\ast}{}_{\bma\bmb\bme} \equiv -\frac{1}{2}\epsilon^{\bmm\bmn}{}_{\bmb\bme} T_{\bma\bmm\bmn},
\]
and has the property that
\[
T^{\ast \ast}{}_{\bma\bmb\bmc} = -T_{\bma\bmb\bmc}.
\]
Depending on the symmetries and rank of the tensor, the above
definition for electric and magnetic decomposition may vary
slightly. Central for our discussion is that $omega_{\bma\bmb}$ and $\omega^{\ast}_{\bma\bmb}$
are spatial and symmetric.

\section{Conformal Einstein dust flows with positive cosmological constant}
\label{secFriedrichReview}
In the following let $(\mathcal{\tilde{M}}, \tilde{\bmg})$ denote a spacetime satisfying the Einstein field equations. Following the standard usage we call $(\mathcal{\tilde{M}}, \tilde{\bmg})$ the \emph{physical spacetime}. We will also consider a conformally rescaled spacetime $\left(\mathcal{M}, \bmg\right)$, the \emph{unphysical spacetime}, whose metric satisfies
\begin{equation}
\label{ConformalMetricDef}
    \bmg = \Omega^2 \tilde{\bmg}, \qquad \bmg^{\bm\sharp} = \Omega^{-2} \tilde{\bmg}^{\bm\sharp},
\end{equation}
 with $\Omega$ a smooth function defined on $\mathcal{M}$ which plays the role of a \emph{boundary defining function}. The conformal boundary $\mathcal{I}^+$ is defined as all the points where $\Omega$ vanish --- i.e. $\mathcal{I}^+ \equiv \{p \in \mathcal{M} \ | \ \Omega(p) = 0\}$. Again, following standard usage we identify the interior of $\mathcal{M}$ by $\mathcal{\tilde{M}}$, and adopt the standard convention that fields with/without a tilde indicate they are defined in relation to the physical/unphysical spacetime.

%denote the physical space time with $\Omega > 0$ a smooth function on $\mathcal{\tilde{M}}$. In order to study the asymptotic behavior of fields on $\mathcal{\tilde{M}}$ we consider the "unphysical" space time $\left(\mathcal{M}, \bmg\right)$. The boundary of $\mathcal{M}$ is defined as all the points where $\Omega$ vanish --- i.e. $\mathcal{I}^+ \equiv \{p \mathcal{M} | \Omega = 0\}$. $\mathcal{I}^+$ --- is such that $\mathcal{M} = \mathcal{\tilde{M}} \cup \mathcal{I}^+$, where . The physical and unphysical metrics are conformally related --- more precisely, $\left(\mathcal{\tilde{M}}, \bm\tilde{g}\right)$

\subsection{Einstein-$\lambda$-dust system}
In this article we are concerned with the Einstein-$\lambda$-dust system governed by the equations 
\begin{subequations}
\begin{eqnarray}
    &&\mathbf{Ric}[\tilde{\bm g}] - \left(\frac{1}{2}R[\tilde{\bm g}] - \lambda \right)\tilde{\bmg} = \kappa \tilde{\bm T},\label{PhysicalEFE}\\
    &&\tilde{\bm T} = \tilde{\rho}\tilde{\bm U} \otimes \tilde{\bm U},\label{PhysicalEnergyMomentum}\\
    &&\tilde{\bm\nabla}_{\bm \tilde{U}}\tilde{\bm U} = 0,\label{PhysicalMatterEq1}\\
    &&\tilde{\bm \nabla} \bm\cdot \tilde{\bm j} = 0,\label{PhysicalMatterEq2}
\end{eqnarray}
\end{subequations}
where $\mathbf{Ric}[\tilde{\bm g}]$, $\tilde{\bm T}$ and $\tilde{\bm U} $ are the Ricci tensor, energy momentum tensor for the physical metric and the four velocity of the particle trajectories, respectively. Furthermore, we have defined the matter current $\tilde{\bm j}$ as,
\[
\tilde{\bm j} \equiv \tilde{\rho}\tilde{\bm U}, 
\]
where $\tilde{\rho}$ is a positive function representing the energy-density of the matter. We also let $\kappa$ and $\lambda$ be positive constants. In the following we shall set $\kappa = 1$ to simplify the discussion.

\begin{remark}
\em Note that equations \eqref{PhysicalMatterEq1} and \eqref{PhysicalMatterEq2} are the equations of motion for the matter fields obtained through the divergence-free condition $\tilde{\bm\nabla} \bm\cdot \tilde{\bmT} = 0$. In particular, equation \eqref{PhysicalMatterEq1} states that the flow lines of the dust matter model are geodesics. 
\end{remark}

\subsection{Conformal transformations}
 The conformal transformation \eqref{ConformalMetricDef} implies the following relationship between the two respective connections,
\begin{equation}
\left(\bm\nabla- \tilde{\bm\nabla}\right)\bm\omega = \bmQ\cdot\bmomega\label{coformalConnectionTrans},
\end{equation}
where $\bm Q$ is a symmetric 3-rank tensor defined in local coordinates $x^{\mu}$ by
\[
\bm Q \equiv \Omega^{-1}\left(\nabla_{\delta}\Omega g_{\mu\nu}g^{\delta\gamma} - \nabla_{\mu}\Omega \delta^{\gamma}{}_{\nu} - \nabla_{\nu}\Omega \delta^{\gamma}{}_{\mu}\right)\mathbf{d}x^{\mu}\otimes \bm\partial_{\gamma} \otimes \mathbf{d}x^{\nu}.
\]
Furthermore we introduce an \emph{unphysical 4-velocity} $\bm U$ related to the physical tangent vector to the flow lines $\tilde{\bm U}$ in such a way that $\bm g(\bm U,\bm U) = \tilde{\bm g}(\tilde{\bm U},\tilde{\bm U}) = -1$ --- i.e. we have

\[
\bm U = \Omega^{-1} \tilde{\bm U} \qquad \bm U^{\bm\flat} = \Omega \tilde{\bm U}^{\bm\flat}.
\]

%\begin{remark}
%\em Another way to say this is that the conformal transformation preserves the null-structure of the metric. It is this feature that makes conformal geometry so powerful as a tool in studying gravitational behaviour.
%\end{remark}

\smallskip
\noindent Since $\tilde{\rho}$ is a scalar field independent of the metric, its transformation rule can be freely specified. It is convenient to define the \emph{unphysical energy density} as 
\[
\rho = \Omega^{-3}\tilde{\rho}.
\]

\smallskip
\noindent Using the conformal transformation \eqref{ConformalMetricDef} one can thus use \eqref{coformalConnectionTrans} to obtain the transformation rule of Ricci tensor of the physical metric $\mathbf{Ric}[\tilde{\bmg}]$, 
\begin{equation}
    \mathbf{Ric}[\bm g] - \mathbf{Ric}[\tilde{\bmg}] = - 2\Omega^{-1}\bm\nabla \bm\nabla\Omega - \bmg \otimes \left(\Omega^{-1}\bm \nabla^2\Omega - 3\Omega^{-2}(\bm\nabla \Omega)^2 \right).\label{RicciTensorTrans}
\end{equation}
Contracting the above equation with  $\bm g$, we find the transformation of the Ricci scalar $R[\tilde{\bm g}]$,
\begin{equation}
 R[\bm g] - \Omega^{-2}R[\tilde{\bm g}] = -6\Omega^{-1}\bm\nabla^2\Omega + 12\Omega^{-3}\left(\bm\nabla\Omega\right)^2.\label{RicciScalarTrans}
\end{equation}
\begin{remark}
\em Observe that equations \eqref{RicciTensorTrans} and \eqref{RicciScalarTrans} are singular at the points where $\Omega = 0$. Thus, using the form of the Ricci tensor and scalar as above will not directly lead to a set of field equations which extends to the conformal boundary $\mathcal{I}^{+}$. It is, therefore, necessary to find another set of equations which are equivalent to \eqref{RicciTensorTrans} and \eqref{RicciScalarTrans} but that extends smoothly to the conformal boundary. Another issue is the freedom in the choice of the conformal factor $\Omega$. This gauge freedom means a solution to the equations \eqref{RicciTensorTrans} and \eqref{RicciScalarTrans} are, in some sense, not unique. Hence, the new set of equations must be constructed such that one can fix this gauge freedom. 
\end{remark}

\subsection{The conformal regular Einstein-$\lambda$-dust system}

 We refer to \cite{Kroon2016} for a derivation of the \textit{regular conformal field equations}. 
In what follows we will only give a short summary of some results from \cite{Friedrich2017} which make up the basis of the next section. We refer the interested reader to the original paper for details.

\medskip
\noindent As in Section \ref{sectionGeometricBackground}, we introduce a $\bmg$-orthonormal frame field $\{\bme_{\bma}\}$ on $\mathcal{M}$ such that in local coordinates $x=(x^{\mu})$ we have that $\bme_{\bma} = e^{\mu}{}_{\bma} \partial_{\mu}$. Furthermore, $g_{\bma\bmb}\equiv\bmg (\bme_{\bma}, \bme_{\bmb}) = \eta_{\bma\bmb}$ and we assume the frame connection defined by \eqref{FrameConnectionDef} to be such that the metric compatibility condition \eqref{metricComp} holds. Then, one can recover the physical metric $\tilde{\bm g}$ by the transformation \eqref{ConformalMetricDef}. Most equations and tensor fields are henceforth given in terms of frame indices.

\medskip
\noindent Using $\{\bme_{\bma}\}$ as the fundamental geometric unknown, we may write a new set of equations entirely in terms of fields on $\mathcal{M}$ which is equivalent to the system \eqref{PhysicalEFE}-\eqref{PhysicalMatterEq2} in the domain $\Omega > 0$ ---namely 
\begin{subequations}
\begin{eqnarray}
&& 6s \Omega - 3\nabla_{\bma}\Omega \nabla^{\bma}\Omega - \lambda = \frac{1}{4}\Omega^3 \rho,\label{ConformalEq1}\\
&&\nabla_{\bmb}\nabla_{\bmd}\Omega + \Omega L_{\bmb\bmd} - s g_{\bmb \bmd} = \frac{1}{2}\Omega^2 \rho \left(U_{\bmb}U_{\bmd} + \frac{1}{4}g_{\bmb\bmd}\right), \label{ConformalEq2}\\
&&\nabla_{\bmd}s + \nabla_{\bma}\Omega L^{\bma}{}_{\bmd} = \frac{1}{2}\nabla^{\bma}\Omega \rho \left(U_{\bma}U_{\bmd}+\frac{1}{4}g_{\bma\bmd}\right) + \frac{1}{8}\Omega\rho \nabla_{\bmd}\Omega + \frac{1}{24}\Omega^2\nabla_{\bmd}\rho, \label{ConformalEq3}\\
&&2\nabla_{[\bmd}L_{\bmc]\bmb} - \nabla_{\bma}\Omega W^{\bma}{}_{\bmb\bmd\bmc} = \Omega\left(\rho\left(\nabla_{[\bmd}U_{\bmc]}U_{\bmb} +U_{[\bmc}\nabla_{\bmd]}U_{\bmb}\right) + \nabla_{[\bmd}\rho U_{\bmc]}U_{\bmb} + \frac{1}{3} \nabla_{[\bmd}\rho g_{\bmc]\bmb}\right)\\
&& \qquad \qquad \qquad \qquad \qquad \qquad + \rho Z_{\bmb\bmd\bmc} \nonumber \label{ConformalEq4} \\
&&\nabla_{\bma}W^{\bma}{}_{\bmb\bmd\bmc} = \rho\left(\nabla_{[\bmd}U_{\bmc]}U_{\bmb} +U_{[\bmc}\nabla_{\bmd]}U_{\bmb}\right) + \nabla_{[\bmd}\rho U_{\bmc]}U_{\bmb} + \frac{1}{3} \nabla_{[\bmd}\rho g_{\bmc]\bmb} + \frac{\rho}{\Omega}Z_{\bmb\bmd\bmc} \label{ConformalEq5}.
\end{eqnarray}
\end{subequations}
The matter equations are given by,
\begin{subequations}
\begin{eqnarray}
&&U^{\bma}\nabla_{\bma}U^{\bmd} = \frac{1}{\Omega}\left(g^{\bmd}{}_{\bma} + U^{\bmd}U_{\bma}\right), \label{ConformalMatterEq1}\\
&&U^{\bma}\nabla_{\bma}\rho = -\rho\chi^{\bma}{}_{\bma} \label{ConformalMatterEq2}.
\end{eqnarray}
\end{subequations}
In the above, the following fields has been defined:
\begin{subequations}
\begin{eqnarray}
&&s \equiv \frac{1}{4} \nabla^{\bma}\nabla_{\bma}\Omega + \frac{1}{24}\Omega R[\bmg],\\
&&W^{\bmd}{}_{\bma\bmb\bmc} \equiv \Omega^{-1}C^{\bmd}{}_{\bma\bmb\bmc},\\
&&Z_{\bmb\bmd\bmc} \equiv \nabla_{[\bmd}\Omega g_{\bmc]\bmb} + 2\nabla_{[\bmd}\Omega U_{\bmc]}U_{\bmb} + U_{[\bmd}g_{\bmc]\bmb}g^{\bme\bmf}\nabla_{\bme}\Omega U_{\bmf}.
\end{eqnarray}
\end{subequations}
\begin{remark}
\em The main interest is to find solutions to the system \eqref{ConformalEq1}-\eqref{ConformalMatterEq2} in the domain $\Omega >0$ which admit a meaningful limit on $\mathcal{I}^+$. It was found by Friedrich that a necessary condition to have this type of solutions is that the geodesics generated by $\tilde{U}$ approach $\mathcal{I}^+$ orthogonally ---see \cite{Friedrich2017}.
\end{remark}
\noindent In the above and in what follows, the frame field is fixed by choosing $e_{\bm0} = \bm U$ and the \textit{Lagrangian gauge} --- i.e. given coordinates $x^{\mu}$ in a neighbourhood $\mathcal{U} \subset \mathcal{M}$ then the frame components of $e_{\bm 0}$ are given by $e_{\bm0}{}^{\mu} = \delta_{\bm0}{}^{\mu}$. Moreover, \textit{Fermi propagation} of the spatial components of the frame will be employed. More precisely, one has that,
\[
\Gamma_{\bm 0}{}^{\bma}{}_{\bmb} = 0.
\]

\subsection{Regularisation of the equations}
\label{Section:EvolutionEquations}

In order for the above system to be of use, it is necessary to deal with the singular equations \eqref{ConformalEq5} and \eqref{ConformalMatterEq1}. To do so, one makes use of the \textit{conformal geodesic equation},
\begin{subequations}
\begin{eqnarray}
&&\tilde{\nabla}_{\bm U}\bm U + 2\langle{\bm b, \bm V\rangle}\bm U - \tilde{\bmg}(\bm U,\bm U)\bm b^{\bm\sharp}=0\nonumber\\
&&\tilde{\nabla}_{\bm U}\bm b - \langle{\bm b, \bm U\rangle}\bm b + \frac{1}{2} \tilde{\bmg}(\bm b,\bm b)\bm U^{\bm\flat} - \bm \tilde{L}(\bm U, \Z )=0,\nonumber
\end{eqnarray}
\end{subequations}
where $\tilde{\bm{L}}$ is the Shouten tensor for the physical metric and $\bm b$ a one form associated with a curve $\gamma (\bm \tau)$ for which $\bm U$ is the tangent vector. A solution $(\bm b (\tau), \bm U(\tau) )$ to the conformal geodesic equation is called a \textit{conformal geodesic}.
\begin{remark}
\em The one form $\bm b$ can be thought of as an accelleration associated with $\bm U$. Thus, for $\bm b = 0$, a conformal geodesic coinsides with a metric geodesic.
\end{remark}
\noindent Given local coordinates $x=(x^{\mu})$ and a curve parametrised by $\sigma \in \mathds{R}$, Friedrich defines a new 1-form $\bmf$ with components,
\begin{equation}
f_{\nu}(\sigma) \equiv b_{\nu}(\sigma) - \Omega^{-1}\nabla_{\nu}\Omega \rvert_{x(\sigma)},
\end{equation}
where $b_{\nu}$ are the components of a one-form satisfying the geodesic equation. If, in addition, the components of a vector $\bm V$ are given by $V^{\mu}(\sigma)
 = \dfrac{dx^{\mu}}{d\sigma}$, then $\left( \bm f, \bm V\right)$ is a solution to the equations
 \begin{subequations}
\begin{eqnarray}
&&\nabla_{\bm V}\bm V + 2\langle{\bmf, \bm V\rangle}\bm V - \bmg(\bm V,\bm V)\bmf^{\bm\sharp}=0 \label{EvolEqForV}\\
&&\nabla_{\bm V}\bmf - \langle{\bmf, \bm V\rangle}\bmf + \frac{1}{2} \bmg(\bmf,\bmf)\bm V^{\bm\flat} - \bm L(\bm V, \Z )=0.\label{EvolEqForf}
\end{eqnarray}
\end{subequations}
\begin{remark}
\em Observe that equations \eqref{EvolEqForV} and \eqref{EvolEqForf} involves only conformal fields, as opposed to the conformal geodesic equation. 
\end{remark}
\noindent By assuming that $\bm V$ is related to the tangent vector of a geodesic of the matter particles via 
\[
\bm V = \omega^{-1} \tilde{\bm U}, \qquad \omega^{-1} \equiv \dfrac{dt}{d\sigma}
\]
and with the relations
\[
\bmg(\bm V, \bm V) = -\theta^{-2}, \qquad \theta = \frac{\omega}{\Omega}, \qquad \nabla_{\bm U}\theta = \theta \langle\bm U,\bm f \rangle,
\]
it can then be shown, using the definition for $\bmf$, that one obtains a \textit{regularising relation} which in frame indices takes the form,
\begin{equation}
\label{RegularisingEq}
    \nabla_{\bma}\Omega = - \left(\nabla_{\bm 0}\Omega +\Omega f_{\bm0}U_{\bm a}\right) - \Omega f_{\bm a}.
\end{equation}
Using the above equation in \eqref{ConformalEq5} and \eqref{ConformalMatterEq1} one removes the singularities in the system of equations \eqref{ConformalEq1}-\eqref{ConformalMatterEq2} which now can be smoothly extended to the conformal boundary. In other words, one has a regular system of field equations. Moreover, it can be shown that these equations imply a symmetric hyperbolic system of equations for the unknowns
\[
(e_{\bmi}{}^{\mu}, \Gamma_{\bmk}{}^{\bmi}{}_{\bmj}, f_{\bmd}, \varsigma_{\bmi\bmj},\xi,\Omega,\Sigma_{\bmd}, s, L_{\bm0\bmi}, L_{\bmi\bmj}, \rho, \omega_{\bmi\bmj}, \omega^{\ast}{}_{\bmi\bmj}).
\]
More precisely, one has the equations
\begin{subequations}
\begin{eqnarray}
&&\partial_{0} e_{\bmi}{}^{\mu} = -f_{\bmi}\delta_{\bm0}{}^{\mu} - \chi_{\bmi}{}^{\bmj}e_{\bmj}{}^{\mu}\label{ReducedEvolEq1},\\
&&\partial f_{\bm0} = -\frac{1}{2}f_{\bma}f^{\bma} + L_{\bm0\bm0},\\
&&\partial_{\bm0} f_{\bmi} = L_{\bm0\bmi},\\
&&\partial_{\bm0} \varsigma_{\bmi\bmj} = -\Omega\left(\varsigma_{\bmi}{}^{\bmk}\varsigma_{\bmk\bmj} - \frac{1}{3}\varsigma^{\bmk\bml}\varsigma_{\bml\bmk}g_{\bmi\bmj} \right) - \frac{2}{3}\left(\nabla_{\bm U}\Omega\right)^{-1}\left(\Omega \xi - 3s\right)\varsigma_{\bmi\bmj} - W_{\bm0 \bmi\bm0\bmj},\\
&&\partial_{\bm0}\xi = \left(\nabla_{\bm U}\Omega\right)^{-1}\left(\Omega \xi - 3s\right)\left(-\frac{1}{3}\xi + f_{\bmi}f^{\bmi} - L_{\bm0\bm0} + \frac{1}{4}\rho \Omega\right) - \nabla_{\bm U}\Omega \Omega\varsigma^{\bmk\bml}\varsigma_{\bml\bmk}\nonumber\\
&& \qquad + 3 f^{\bmi}L_{\bmi\bm0} - \frac{3}{4}\rho \nabla_{\bm U}\Omega,\\
&&\nabla_{\bm0}\Omega=\Sigma_{\bm0},\label{PropagationOmega}\\
&&\nabla_{\bm0}\Sigma_{\bmd} = -\Omega L_{\bm0\bmd} + sg_{\bm0\bmd} + \frac{1}{2}\Omega^2\rho \left(U_{\bm0}U_{\bmd} + \frac{1}{4}g_{\bm0\bmd}\right)\label{PropagationSigma},\\
&& \nabla_{\bm0}s = -\nabla^{\bma}\Omega L_{\bma\bm0}= \frac{1}{2}\Omega\rho\nabla^{\bma}\Omega\left(U_{\bm0}U_{\bmd} + \frac{1}{4}g_{\bm0\bmd}\right) + \frac{1}{8}\Omega\rho \nabla_{\bm0}\Omega + \frac{1}{24}\Omega^2\nabla_{\bm0}\rho,\\
&&\nabla_{\bm0}L_{\bm0\bmi} = h^{\bmi\bmj}\nabla_{\bmj}L_{\bmi\bmk} + \frac{1}{6}\nabla_{\bmi}R + K^{\bmb}{}_{\bmb\bmi},\\
&&\nabla_{\bm0}L_{\bmi\bmi}=\nabla_{\bmi}L_{\bm0\bmi} + K_{\bmi\bm0\bmi},\\
&&\nabla_{\bm0}L_{\bmi\bmj} = \nabla_{\bmi}L_{\bm0\bmj} + \nabla_{\bmj}L_{\bm0\bmi} + K_{\bmi\bm0\bmj} + K_{\bmj\bm0\bmi},\\
&&\nabla_{0}\omega_{\bmi\bmj} + D_{\bmk}\omega^{\ast}{}_{\bml(\bmj}\epsilon_{\bmi)}{}^{\bmk\bml} = L.O.T,\\
&&\nabla_{\bm0}\omega^{\ast}{}_{\bmi\bmj} - D_{\bmk}\omega_{\bml(\bmj}\epsilon_{\bmi)}{}^{\bmk\bml} = L.O.T,\label{ReducedEvolEqLast}
\end{eqnarray}
\end{subequations}
where $L.O.T$ stand for 'lower order terms.' In the above, the following fields have been defined,
\begin{subequations}
\begin{eqnarray}
&& \omega_{\bma\bmb} \equiv W_{\bmc\bmd\bm r\bms}U^{\bmc}U^{\bm r}h^{\bmd}{}_{\bma}h^{\bms}{}_{\bmb}, \qquad \omega^{\ast}_{\bma\bmb} \equiv  \frac{1}{2} W_{\bmc\bmd\bmp\bmq}\epsilon^{\bmm \bmn}{}_{\bm r\bms}U^{\bmc}U^{\bm r}h^{\bmd}{}_{\bma}h^{\bms}{}_{\bmb}\\
&& \varsigma_{\bmi\bmj} \equiv \Omega^{-1}\left(\xi_{\bmi\bmj} - \frac{1}{3}g_{\bmi\bmj} \xi\right), \qquad \xi \equiv \Omega^{-1}\left(\nabla_{\bm U} \Omega \chi + 3s\right)\label{xiDef}, \qquad \Sigma_{\bmd} \equiv \nabla_{\bmd}\Omega.
\end{eqnarray}
\end{subequations}
It has also been made use of the following relations,
\begin{subequations}
\begin{eqnarray}
&& \chi_{\bmi\bmj} = \Omega \varsigma_{\bmi\bmj} + \frac{1}{3}\left(\nabla_{\bm U}\Omega\right)^{-1}\left(\Omega \xi - 3s\right)g_{\bmi\bmj},\\
&&\varsigma_{\bmi\bmj} = -\left(\nabla_{\bm U}\Omega\right)^{-1}\left(D_{\bmi}f_{\bmj} - f_{\bmi}f_{\bmj} - L_{\bmi\bmj} - \frac{1}{3}\left(D_{\bmk}f^{\bmk} - f_{\bmk}f^{\bmk} - L_{\bmk}{}^{\bmk}g_{\bmi\bmj}\right)\right),\\
&&\xi = -D_{\bmi}f^{\bmi} + f_{\bmi}f^{\bmi} + L_{\bmi}{}^{\bmi} - \frac{3}{8}\Omega\rho,\label{xiIdentity}\\
&& -L_{\bm0 \bm0} + g^{\bmi\bmj}L_{\bmi\bmj} = L_{\bmb}{}^{\bmb} = \frac{1}{6}R.\label{TraceOfL}
\end{eqnarray}
\end{subequations}

\subsection{Relation to the physical field equations}

The equations \eqref{ConformalEq1}-\eqref{ConformalEq5} and \eqref{ConformalMatterEq1}-\eqref{ConformalMatterEq2} are evolution equations to the reduced system. This is, however, only a subset of the full Einstein-frame-equations. The remaining equations are constraints. It is therefore necessary to show that these constraints propagate, which indeed has been done in \cite{Friedrich2017}. We thus have the following theorem adapted from Friedrich:

\begin{theorem}
A solution 
\[
\mathbf{u} = \left(e_{\bmi}{}^{\mu}, \Gamma_{\bmk}{}^{\bmi}{}_{\bmj}, f_{\bmd}, \varsigma_{\bmi\bmj},\xi,\Omega,\Sigma_{\bmd}, s, L_{\bm0\bmi}, L_{\bmi\bmj}, \rho, \omega_{\bmi\bmj}, \omega^{\ast}{}_{\bmi\bmj} \right)\]
to the symmetric hyperbolic system \eqref{ReducedEvolEq1}-\eqref{ReducedEvolEqLast} satisfying the constraint equations associated to the conformal equations \eqref{ConformalEq1}-\eqref{ConformalEq5} and \eqref{ConformalMatterEq1}-\eqref{ConformalMatterEq2} on an initial hypersurface  implies a solution to the Einstein-$\lambda$-dust system \eqref{PhysicalEFE}-\eqref{PhysicalMatterEq2} whenever $\Omega\neq 0$.
\end{theorem}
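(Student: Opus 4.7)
The statement is a \emph{constraint propagation} result: the reduced symmetric hyperbolic system \eqref{ReducedEvolEq1}--\eqref{ReducedEvolEqLast} has been extracted from the full set of conformal Einstein--$\lambda$--dust equations \eqref{ConformalEq1}--\eqref{ConformalEq5}, \eqref{ConformalMatterEq1}--\eqref{ConformalMatterEq2} together with the regularising relation \eqref{RegularisingEq}, so a solution of the reduced system is only guaranteed to solve the full conformal system if the discarded equations (the constraints) also hold. The plan is therefore classical: (i) assemble the full conformal system into a collection of \emph{zero-quantities}; (ii) derive a homogeneous symmetric hyperbolic subsidiary system for them; (iii) invoke uniqueness to conclude that these zero-quantities stay zero; and (iv) translate the resulting full conformal solution back to the physical picture in the region $\Omega\neq 0$.

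For step (i), I would define zero-quantities $Z_1,\dots,Z_N$ as the differences between the left and right hand sides of \eqref{ConformalEq1}--\eqref{ConformalEq5}, together with analogous quantities encoding the vanishing of the torsion $\Sigma_{\bma}{}^{\bmc}{}_{\bmb}$, the geometric identity relating the curvature of the connection $\Gamma_{\bma}{}^{\bmb}{}_{\bmc}$ to $C^{\bmd}{}_{\bma\bmb\bmc}$ and $L_{\bma\bmb}$ via \eqref{RiemannDecomposition}, the matter conservation equation \eqref{ConformalMatterEq2}, and the definitions hidden in \eqref{PropagationOmega}--\eqref{PropagationSigma}, \eqref{xiDef}, \eqref{xiIdentity}, \eqref{TraceOfL}. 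By construction these vanish on the initial hypersurface. For step (ii), I would apply $\nabla_{\bma}$ to each zero-quantity and repeatedly substitute the reduced evolution equations \eqref{ReducedEvolEq1}--\eqref{ReducedEvolEqLast}, using: the second Bianchi identity \eqref{2ndBianchiId} to turn derivatives of the Weyl-like field $W^{\bma}{}_{\bmb\bmc\bmd}$ into first-order expressions; the commutator identities for $\nabla_{[\bma}\nabla_{\bmb]}$ to convert second-order terms into curvature times the zero-quantities; the matter equations \eqref{ConformalMatterEq1}--\eqref{ConformalMatterEq2} and the regularisation \eqref{RegularisingEq} to handle the $\rho$ and $U^{\bma}$ factors; and the contracted Bianchi identity together with \eqref{Definition:Schouten} to control $\nabla_{\bmb}L^{\bma}{}_{\bmc}$. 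The outcome should be a system of the schematic form $A^{\bma}\partial_{\bma}\mathbf{Z} = H(\mathbf{u})\,\mathbf{Z}$ with $A^{\bm 0}$ positive definite, i.e.\ a symmetric hyperbolic homogeneous system --- this is exactly the calculation carried out in Section 5 of \cite{Friedrich2017}, which I would invoke rather than reproduce.

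Step (iii) is then immediate: standard uniqueness for linear symmetric hyperbolic systems (see e.g.\ \cite{Kroon2016}) forces $\mathbf{Z}\equiv 0$ on the domain of dependence of the initial slice, so $\mathbf{u}$ in fact satisfies the full conformal system \eqref{ConformalEq1}--\eqref{ConformalEq5}, \eqref{ConformalMatterEq1}--\eqref{ConformalMatterEq2}, the frame is torsion-free and metric compatible, and $C^{\bmd}{}_{\bma\bmb\bmc},L_{\bma\bmb}$ are genuinely the Weyl and Schouten tensors of $\bmg=\eta_{\bma\bmb}\bmomega^{\bma}\otimes\bmomega^{\bmb}$. For step (iv), on the open set $\{\Omega\neq 0\}$ I define $\tilde{\bmg}\equiv\Omega^{-2}\bmg$, $\tilde{\bm U}\equiv\Omega\,\bm U$ and $\tilde{\rho}\equiv\Omega^{3}\rho$ as dictated by \eqref{ConformalMetricDef}. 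Then \eqref{ConformalEq2} (after elimination of $s$ via the first unknown and of $L_{\bma\bmb}$ via \eqref{Definition:Schouten}) is precisely the conformal rewriting of the trace-free part of \eqref{PhysicalEFE}, \eqref{ConformalEq1} encodes the trace together with the cosmological constant $\lambda$, and \eqref{ConformalMatterEq1}--\eqref{ConformalMatterEq2} combined with the regularising relation \eqref{RegularisingEq} and the conformal transformation rule of $\tilde{\bm\nabla}$ in \eqref{coformalConnectionTrans} yield \eqref{PhysicalMatterEq1}--\eqref{PhysicalMatterEq2}.

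The main obstacle is undoubtedly step (ii), since the matter terms on the right hand sides of \eqref{ConformalEq4}--\eqref{ConformalEq5} couple the geometric subsystem to the dust fluid and produce, after differentiation, expressions containing $\rho\,\nabla U$ and $\rho/\Omega$ contributions through $Z_{\bmb\bmd\bmc}$; it is precisely here that one must use the regularisation \eqref{RegularisingEq} and the identity \eqref{xiIdentity} to cancel the apparent $\Omega^{-1}$ singularity and close the subsidiary system symmetrically. I would rely on the explicit computations of \cite{Friedrich2017} at this stage, emphasising only the structural fact that the zero-quantities form a closed, homogeneous and symmetric hyperbolic system.
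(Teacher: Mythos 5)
Your proposal is correct and follows essentially the same route as the paper, which states the theorem as "adapted from Friedrich" and defers the propagation-of-the-constraints computation entirely to \cite{Friedrich2017}; your outline simply makes explicit the standard zero-quantity/subsidiary-system/uniqueness skeleton that underlies that citation, and your conformal-to-physical dictionary in step (iv) matches the paper's conventions. No gap to report.
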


\noindent In the following we will consider two different types of initial hypersurfaces for the conformal evolution equations \eqref{ReducedEvolEq1}-\eqref{ReducedEvolEqLast}:

\begin{itemize}
    \item[(i)] the conformal boundary $\mathcal{I}^+$;
    
    \item[(ii)] standard Cauchy hypersurface $\mathcal{S}_\star$. 
\end{itemize}

\noindent As it will be seen in more detail in the sequel, initial data for the conformal evolution equations on a standard hypersurface can be obtained from the solution of the Hamiltonian and momentum constraints implied by the Einstein-$\lambda$-dust system \eqref{PhysicalEFE}-\eqref{PhysicalMatterEq2}. 

\bigskip
 \bigskip
 \bigskip
 \bigskip

\section{Backward evolution of self-gravitating dust balls}
\label{secPastStability}
The purpose of this section is to study the (backward) evolution of asymptotic initial data for the conformal Einstein-$\lambda$-dust system which describes a collection of self-gravitating dust balls.\\

\noindent The following result is obtained by assuming certain gauge choices on a hyper surface $\mathcal{S}$ which later is interpreted to be the conformal boundary $\mathcal{I}^+$. These are,
\[
s = 0, \qquad \xi_{\bmi\bmj} = 0, \qquad \nabla_{\bma}\nabla_{\bmb}\Omega = 0, \qquad L_{\bm0 \bmi} = L_{\bmi\bm0} = 0.
\]
Note, however, that these choices may not be satisfied if one evolves a solution from $\mathcal{M}$ to $S$. In that case, the reader is referred to the original article \cite{Friedrich2017} for details.
\subsection{Asymptotic initial data for self-gravitating dust balls}
 All throughout it is assumed that the initial hyper surface $\mathcal{I}^+$ representing the conformal boundary is a compact 3-manifold. For (asymptotic) initial data prescribed on an hyper surface corresponding to the conformal boundary the following holds:

\begin{lemma}
\label{lemma1}
Any smooth initial data set for the conformal evolution equations \eqref{ReducedEvolEq1}-\eqref{ReducedEvolEqLast} is uniquely determined on $\mathcal{I}^{+}$ by a Riemannian metric $h_{ij}$, the density $\rho \geq 0$, the acceleration $f_{i}$ and symmetric, $\bmh$-tracefree tensor field $\omega_{ij}$, which are arbitrary up to the relation
\begin{equation}
D^{\bmi}\omega_{\bmi\bmj} = \frac{1}{3}D_{\bmj}\rho - \rho f_{\bmj},\label{DivergenceOmega}
\end{equation}
on $\mathcal{I}^+$, and where $D$ denotes the Levi-Civita operator defined by $h_{ij}$.
\end{lemma}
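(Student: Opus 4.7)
The plan is to march through the thirteen fields in the initial data vector $\mathbf{u}$ and show, component by component, which are fixed by the choice of $h_{ij}$ alone, which are algebraically determined by the gauge conditions and the free data $(h_{ij}, \rho, f_{i}, \omega_{ij})$, and which are forced by the constraint equations associated with the system \eqref{ConformalEq1}--\eqref{ConformalMatterEq2}. The relation \eqref{DivergenceOmega} will emerge as the unique residual obligation on the electric Weyl datum $\omega_{ij}$, while its magnetic counterpart $\omega^{\ast}_{ij}$ will turn out to be completely determined by a Cotton-type spatial differential expression in fields already pinned by $(h_{ij}, f_{i})$.

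First I would exploit that $\mathcal{I}^{+}$ is the zero set of $\Omega$: this gives $\Omega = 0$ and, by Friedrich's orthogonality condition together with the Lagrangian gauge $e_{\bm0}{}^{\mu} = \delta_{\bm0}{}^{\mu}$, the vanishing of the spatial components $\Sigma_{\bmi}$; equation \eqref{ConformalEq1} then reduces to $\Sigma_{\bm0}^{2} = \lambda/3$, fixing $\Sigma_{\bm0}$ up to sign. The metric $h_{ij}$ on $\mathcal{I}^{+}$ determines (up to $O(3)$ rotations) an orthonormal triad $e_{\bmi}{}^{\mu}$ and its Levi--Civita connection coefficients $\Gamma_{\bmk}{}^{\bmi}{}_{\bmj}$. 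The gauge conditions $s = 0$ and $\xi_{\bmi\bmj} = 0$, combined with the definition $\xi \equiv \Omega^{-1}(\nabla_{\bmU}\Omega\,\chi + 3s)$ and $\Sigma_{\bm0} \neq 0$, force $\chi = 0$, $\chi_{\bmi\bmj} = 0$, $\xi = 0$, and $\varsigma_{\bmi\bmj} = 0$ on $\mathcal{I}^{+}$. The identity \eqref{xiIdentity} evaluated at $\Omega = 0$ with $\xi = 0$ then gives $L_{\bmi}{}^{\bmi} = D_{\bmi}f^{\bmi} - f_{\bmi}f^{\bmi}$, while the vanishing of $\varsigma_{\bmi\bmj}$ fixes the trace-free part of $L_{\bmi\bmj}$ in terms of $D_{(\bmi}f_{\bmj)}$ and $f_{\bmi}f_{\bmj}$; the trace relation \eqref{TraceOfL} then determines $L_{\bm0\bm0}$. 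The gauge $L_{\bm0\bmi} = 0$ disposes of the remaining Schouten components, and the propagation equation for $f_{\bm0}$ (or the residual gauge freedom at the initial slice) pins down its value.

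It remains to handle the Weyl data. I would project equation \eqref{ConformalEq5} onto $\mathcal{I}^{+}$ with $U^{\bmc}h^{\bmb}{}_{(\bmi}h^{\bmd}{}_{\bmj)}$ in order to extract the spatial divergence of the electric Weyl part $\omega_{\bmi\bmj}$, using the regularising relation \eqref{RegularisingEq} to convert the nominally singular term $\rho Z_{\bmb\bmd\bmc}/\Omega$ into a manifestly regular expression whose restriction to $\mathcal{I}^{+}$ contributes $-\rho f_{\bmj}$; the source $\tfrac{1}{3}\nabla_{[\bmd}\rho\,g_{\bmc]\bmb}$ contributes $\tfrac{1}{3}D_{\bmj}\rho$; the remaining sources carry an explicit factor of $\Omega$ and so vanish on the conformal boundary. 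These pieces assemble precisely into \eqref{DivergenceOmega}. A dual projection of the same equation produces an algebraic formula expressing $\omega^{\ast}_{\bmi\bmj}$ as a Cotton-type spatial curl of $L_{\bmi\bmj}$ together with matter corrections, showing that $\omega^{\ast}_{\bmi\bmj}$ is not independent data. The main technical obstacle is precisely this book-keeping around \eqref{RegularisingEq}: one must verify carefully that the $\rho/\Omega$ contribution in \eqref{ConformalEq5} combines with the other sources so that all $\Omega^{-1}$ terms cancel on $\mathcal{I}^{+}$, leaving the clean relation \eqref{DivergenceOmega} as the only residual constraint on the otherwise freely specifiable tuple $(h_{ij}, \rho, f_{i}, \omega_{ij})$.
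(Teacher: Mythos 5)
The paper does not actually prove Lemma \ref{lemma1}: it is imported from \cite{Friedrich2017} once the gauge conditions $s=0$, $\xi_{\bmi\bmj}=0$, $\nabla_{\bma}\nabla_{\bmb}\Omega=0$, $L_{\bm0\bmi}=0$ are imposed on the initial slice, so your proposal has to be judged against that source rather than against an argument in the text. Your overall architecture is the right one: $\Omega=0$ plus orthogonality of the flow lines to $\mathcal{I}^+$ gives $\Sigma_{\bmi}=0$ and $\Sigma_{\bm0}^2=\lambda/3$ from \eqref{ConformalEq1}; the gauge conditions together with \eqref{xiIdentity}, the expression for $\varsigma_{\bmi\bmj}$ and \eqref{TraceOfL} express all Schouten components through $h_{ij}$ and $f_i$; and the only residual obligation is a divergence condition on the electric Weyl datum. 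However, there are two concrete errors in the steps that carry the actual content of the lemma.

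First, your derivation of \eqref{DivergenceOmega} misallocates the sources of \eqref{ConformalEq5}. None of the source terms in \eqref{ConformalEq5} carries an explicit factor of $\Omega$ (that is \eqref{ConformalEq4}), so your claim that ``the remaining sources vanish on the conformal boundary'' is false. Under the electric projection $U^{\bmb}U^{\bmd}h^{\bmc}{}_{\bmj}$ \emph{every} source contributes: the term $\rho\,\nabla_{[\bmd}U_{\bmc]}U_{\bmb}$ yields $\tfrac{1}{2}\rho\, a_{\bmj}$ with $a_{\bmj}=-f_{\bmj}$ the regularised acceleration from \eqref{ConformalMatterEq1}; the term $\nabla_{[\bmd}\rho\, U_{\bmc]}U_{\bmb}$ yields a multiple of $D_{\bmj}\rho$ comparable to that of $\tfrac{1}{3}\nabla_{[\bmd}\rho\, g_{\bmc]\bmb}$; and the $\rho Z_{\bmb\bmd\bmc}/\Omega$ term, which by \eqref{RegularisingEq} equals $-\rho\bigl(\hat f_{[\bmd}g_{\bmc]\bmb}+2\hat f_{[\bmd}U_{\bmc]}U_{\bmb}\bigr)$ with $\hat f_{\bma}$ the spatial part of $f_{\bma}$, supplies only part of the $\rho f_{\bmj}$. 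Only the sum of all four reproduces $\tfrac{1}{3}D_{\bmj}\rho-\rho f_{\bmj}$; your bookkeeping, taken literally, gives the wrong coefficients. Second, the claim that $\omega^{\ast}_{\bmi\bmj}$ is fixed algebraically by ``a dual projection of the same equation'' \eqref{ConformalEq5} fails as stated: the magnetic projection of the divergence-of-Weyl equation produces a second \emph{divergence} condition $D^{\bmi}\omega^{\ast}{}_{\bmi\bmj}=(\mbox{source})$, not an algebraic formula. The Cotton-type determination of $\omega^{\ast}_{\bmi\bmj}$ in terms of $h_{ij}$ and $f_i$ comes from the fully spatial projection of \eqref{ConformalEq4}, where on $\mathcal{I}^+$ the term $\nabla_{\bma}\Omega\,W^{\bma}{}_{\bmb\bmd\bmc}$ collapses to $-\Sigma_{\bm0}U_{\bma}W^{\bma}{}_{\bmb\bmd\bmc}$, whose spatial components are the dual of $\omega^{\ast}$. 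Without that step you cannot conclude that $\omega^{\ast}_{\bmi\bmj}$ is not free data, and you also omit the check that the divergence condition on $\omega^{\ast}$ implied by \eqref{ConformalEq5} is automatically satisfied by the Cotton formula. Both points are repairable, but as written the proof does not establish either the precise form of \eqref{DivergenceOmega} or the non-freeness of $\omega^{\ast}_{\bmi\bmj}$.
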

\noindent Observe that $\rho$ is allowed to be zero. This suggest to consider a density profile which represents patches of dust in an otherwise empty space. In the case of a strictly positive density function, the data $\rho, \bm\omega$ and $\bmh$ can be prescribed freely, and equation \eqref{DivergenceOmega} is read as a defining equation for the acceleration $\bm f$, unless one has further conditions on $\bm f$ such as hyper surface orthogonality etc., in which case the equation must be treated as a differential equation.
\medskip

\noindent In the following it will be shown how the above result can be used to construct asymptotic initial data representing a collection of dust balls. \medskip

\noindent The starting observation of our analysis is the fact that equation \eqref{DivergenceOmega} in Lemma \ref{lemma1} is an underdetermined condition (3 equations) for the 5 independent components of the tracefree tensor $\omega_{\bmi\bmj}$. Nevertheless, this type of divergence equations are well understood in the context of the analysis of the momentum constraint ---see e.g. \cite{Choquet-Bruhat2009}. 

\noindent In the following it will be convenient to define
\begin{equation}
\label{DefinitionNewVariable}
    \varpi_{\bmi\bmj} \equiv D_{\bmi}s_{\bmj} + D_{\bmj}s_{\bmi} - \frac{2}{3}h_{\bmi\bmj}D^{\bmk}s_{\bmk} + \Psi'_{\bmi\bmj},
\end{equation}
where $\Psi'_{\bmi\bmj}$ is a symmetric $\bmh$-tracefree tensor field which may be freely specified and $s_{\bmi}$ is an arbitrary covector field. A direct computation shows that $\varpi_{\bmi\bmj}$ is a solution of \eqref{DivergenceOmega} if $s_{\bmi}$ satisfies
\begin{equation}
    \label{ElipticEquation}
    \Delta_{\bmh}s_{\bmj} + D^{\bmi}D_{\bmj}s_{\bmi} - \frac{2}{3}D_{\bmj}D_{\bmk}s^{\bmk} = k_{\bmj} - D^{\bmi}\Psi'_{\bmi\bmj},
\end{equation}
where we have defined
\[
k_{\bmi} \equiv \frac{1}{3}D_{\bmj}\rho - \rho f_{\bmj}.
\]
Equation \eqref{ElipticEquation} is of elliptic type ---in particular, it provides 3 equations for the 3 components of $s_i$. It is convenient to reformulate the above equations by defining the operators
\[
\bm\delta \left(\bm\omega\right)_{\bmj} \equiv D^{\bmi}\omega_{\bmi\bmj}, \qquad \bmL\left(\bms\right)_{\bmi\bmj} \equiv D_{\bmi}s_{\bmj} + D_{\bmj}s_{\bmi} - \frac{2}{3}h_{\bmi\bmj}D^{\bmk}s_{\bmk}, \qquad \mathcal{L}\left(\bms\right)_{\bmj} \equiv \Delta_{\bmh}s_{\bmj} + D^{\bmi}D_{\bmj}s_{\bmi} - \frac{2}{3}D_{\bmj}D_{\bmk}s^{\bmk}. 
\]
We shall refer to these throughout as the \textit{divergence operator}, the \textit{conformal Killing operator} and the \textit{vector Laplacian operator}, respectively. It is readily seen that the vector Laplacian operator is a result of the composition of the divergence and conformal Killing operator ---i.e. we have
\begin{equation}
\mathcal{L}\left(\bms\right) = \left(\bm\delta \circ \bm L\right) \left(\bm s\right).\label{EllipticOperatorComposition}
\end{equation}
In terms of the above definitions, equations \eqref{DivergenceOmega} and \eqref{ElipticEquation} take the simple form
\begin{subequations}
\begin{eqnarray}
&& \bm\delta \left(\bm\omega\right)_{\bmj} = k_{\bmj},\label{DivergenceOmegaOperatorEq}\label{DivergenceConstraint}\\
&& \mathcal{L}\left(\bms\right)_{\bmj}  = k_{\bmj} - D^{\bmi}\Psi'_{\bmi\bmj}.\label{ElipticOperatorEq}
\end{eqnarray}
\end{subequations}
A solution $\bm s$ of equation \eqref{ElipticOperatorEq} solves equation \eqref{DivergenceConstraint} if the symmetric tracefree tensor $\bm\omega$ is of the form given by \eqref{DefinitionNewVariable}. To solve the elliptic equation for the covector $\bm s$ we make use of the following \cite{Eva98}: 

\begin{fact}[Fredholm alternative]
\label{FredholmTheorem}
 Given any $\bmu$ and $\bmv$ $\in L^2$, then there exists a solution $\bmu$ of the elliptic equation \[
 \mathcal{\bmL}\left(\bm u\right)= \bm F
\]
if there exists a $\bmv$ which solves $ \mathcal{\bm L}^{\ast}\left(\bm v\right) = 0$ and satisfy the $L^2$-inner product
\begin{equation}
\langle{\bm v,\bm F\rangle} = \int_{\mathcal{S}} h^{ij} v_i F_j \mathrm{d}\mu = 0.
\label{FredholmObstruction}
\end{equation}
\end{fact}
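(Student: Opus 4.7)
The plan is to reduce the claim to the standard Fredholm alternative for elliptic differential operators on a compact Riemannian manifold, which applies here because $\mathcal{I}^{+}$ is assumed to be a compact 3-manifold. The first step is to verify that $\mathcal{L}$ is \emph{elliptic}. Computing the principal symbol at a nonzero covector $\xi_{\bmi}$ applied to a one-form $s_{\bmj}$ yields
\[
\sigma(\mathcal{L})(\xi)\,s_{\bmj} \;=\; |\xi|^{2}\,s_{\bmj} + \tfrac{1}{3}\,\xi_{\bmj}\,\xi^{\bmi}\,s_{\bmi},
\]
so contraction with $s^{\bmj}$ gives $|\xi|^{2}|s|^{2}+\tfrac{1}{3}(\xi^{\bmi}s_{\bmi})^{2}\geq 0$, which vanishes only for $s=0$. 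Since the symbol is injective on each finite-dimensional fibre, it is an isomorphism; hence $\mathcal{L}$ is uniformly elliptic of order two.

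Next, I would invoke the general theory of linear elliptic operators on compact manifolds: such an operator extends to a Fredholm map $H^{\ell}\to H^{\ell-2}$ between Sobolev spaces for every $\ell$, with finite-dimensional kernel, finite-dimensional cokernel, and closed range. The closed range theorem then yields the $L^{2}$-orthogonal decomposition
\[
L^{2}(T^{*}\mathcal{I}^{+})\;=\;\mathrm{Range}(\mathcal{L})\,\oplus\,\ker(\mathcal{L}^{*}),
\]
so $\mathcal{L}(\bmu)=\bmF$ admits a weak solution precisely when $\langle \bmv,\bmF\rangle_{L^{2}}=0$ for every $\bmv\in\ker(\mathcal{L}^{*})$, which is exactly the obstruction \eqref{FredholmObstruction}. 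Standard elliptic regularity then upgrades the $L^{2}$ solution to a smooth one whenever $\bmF$ is smooth, giving the conclusion stated in the Fact.

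The step I expect to be the main technical obstacle is identifying $\ker(\mathcal{L}^{*})$ in a concrete form suited to the applications later in Section \ref{secPastStability}. Using the factorisation $\mathcal{L}=\bm\delta\circ\bmL$, a single integration by parts on the compact, boundaryless $\mathcal{I}^{+}$ identifies $\bm\delta$ with (a negative multiple of) the formal $L^{2}$-adjoint of $\bmL$ on symmetric $\bmh$-tracefree tensors, from which one deduces $\mathcal{L}^{*}=\mathcal{L}$ together with the identity
\[
\langle \bmv,\mathcal{L}(\bmv)\rangle_{L^{2}} \;=\; -\,\tfrac{1}{2}\,\|\bmL(\bmv)\|_{L^{2}}^{2}.
\]
Consequently $\ker(\mathcal{L}^{*})$ coincides with the space of \emph{conformal Killing vectors} of $(\mathcal{I}^{+},\bmh)$, and the Fredholm obstruction \eqref{FredholmObstruction} reduces to $L^{2}$-orthogonality of the source against the conformal Killing fields of $\bmh$ --- the form in which the Fact will be applied to solve \eqref{ElipticOperatorEq}. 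The secondary technical point is the careful bookkeeping of the lower-order curvature terms which arise when commuting covariant derivatives on a curved background; these do not affect the Fredholm structure but do enter the precise expression of $\mathcal{L}^{*}$.
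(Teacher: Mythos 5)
The paper does not actually prove this statement: it is quoted as a known Fact from \cite{Eva98}, and the only in-text argument supplied is the subsequent identification of $\ker\mathcal{L}$ with the conformal Killing vectors, which your last paragraph reproduces. Your argument is the standard one and is correct --- the symbol computation, the closed-range decomposition $L^{2}=\mathrm{Range}(\mathcal{L})\oplus\ker(\mathcal{L}^{*})$, elliptic regularity, and the kernel identification via $\mathcal{L}=\bm\delta\circ\bmL$ all go through; moreover your constant in $\langle\bmv,\mathcal{L}(\bmv)\rangle=-\tfrac{1}{2}\|\bmL(\bmv)\|_{L^2}^{2}$ is the right one (the paper's displayed chain of equalities silently drops a factor of $-\tfrac{1}{2}$, harmlessly, since only the vanishing of $\bmL(\bmv)$ matters). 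One point worth flagging explicitly: as printed, the Fact asserts solvability ``if there exists a $\bmv$'' in $\ker\mathcal{L}^{*}$ orthogonal to $\bmF$, whereas the correct criterion --- the one your closed-range argument actually establishes, and the one Lemma~\ref{Lemma:ExistenceAsymptoticData}(ii) implicitly uses --- is orthogonality of $\bmF$ to \emph{every} element of $\ker\mathcal{L}^{*}$, so your proof is of the corrected statement rather than the literal one.
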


\noindent The operators $\bm\delta$ and $\bmL$ can be regarded as formal adjoints of each other under the standard $L^2$-inner product over a compact 3-manifold $\mathcal{S}$. It then follows that their composition, the operator $\mathcal{L}$, is self-adjoint ---that is, 
 \[
 \langle{\bmu, \mathcal{L}\left(\bms\right)\rangle}=\langle{\mathcal{L}\left(\bmu\right), \bms\rangle}.
 \]

\medskip
\noindent In order to make use of the Fredholm alternative to establish the existence of solutions to equation \eqref{ElipticOperatorEq} it is necessary to identify the Kernel of the operator $\mathcal{L}$. For this, it is observed that
\begin{eqnarray*}
&& 0=\langle \bmv, \mathcal{L}(\bmv) \rangle = \langle \bmv, ({\bm \delta}\circ \bmL)(\bmv) \rangle \\
&& \phantom{0}= \langle {\bm \delta}^*(\bmv), \bmL (\bmv) \rangle = \langle \bmL(\bmv), \bmL(\bmv) \rangle.
\end{eqnarray*}
Consequently, any element of the Kernel of $\mathcal{L}$ satisfies the equation $\bmL(\bmv)=0$ ---that is, the Kernel consists of conformal Killing vectors. Thus, if the pair $(\mathcal{S},\bmh)$ does not have conformal Killing vectors (this is the generic situation) then there are no obstructions to the existence of solutions to equation \eqref{ElipticOperatorEq}. On the other hand, if conformal Killing vectors are present then the Kernel orthogonality condition in the Fredholm alternative, equation \eqref{FredholmObstruction}, has to be satisfied. 

\noindent The discussion of the previous paragraph is summarised in the following result where all the relevant fields are assumed to be suitably smooth:
\begin{lemma}
\label{Lemma:ExistenceAsymptoticData}
Let $\mathcal{S}$ denote a compact 3-dimensional manifold. Given a (Riemannian) metric $h_{ij}$, a $\bmh$-tracefree tensor $\Psi'_{ij}$, a covector $f_i$ and a scalar $\rho$ over $\mathcal{S}$ then one of the following holds:
\begin{itemize}
    \item[(i)] if $(\mathcal{S},h_{ij})$ admits no conformal Killing vectors then the tracefree tensor $\varpi_{ij}$ given by equation \eqref{DefinitionNewVariable} gives a solution to the asymptotic constraint \eqref{DivergenceOmega};
    
    \item[(ii)] if $(\mathcal{S},h_{ij})$ admits conformal Killing vectors then $\varpi_{ij}$ given by equation \eqref{DefinitionNewVariable} gives a solution to the asymptotic constraint \eqref{DivergenceOmega} if and only if
    \[
    \int_{\mathcal{S}}v^{\bmi}\left(k_{\bmi} - D^{\bmj}\Psi'_{\bmi\bmj}\right)d\mu = 0
    \]
    for any conformal Killing vector $v^i$.
\end{itemize}

\end{lemma}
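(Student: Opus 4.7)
The plan is to reduce the lemma to a direct application of the Fredholm alternative (Fact 1) for the vector Laplacian $\mathcal{L}$, exploiting the composition identity $\mathcal{L} = \bm{\delta}\circ\bm{L}$ already established in \eqref{EllipticOperatorComposition}. First, I would verify the \emph{sufficiency} half of the assertion: if $\bm{s}$ solves the elliptic equation \eqref{ElipticOperatorEq}, then the tensor field $\varpi_{ij}$ defined by \eqref{DefinitionNewVariable} automatically solves the asymptotic constraint \eqref{DivergenceOmega}. This is a one-line computation, since $\varpi_{ij}=\bm{L}(\bm{s})_{ij}+\Psi'_{ij}$ gives $\bm{\delta}(\bm{\varpi})_j=\mathcal{L}(\bm{s})_j+D^i\Psi'_{ij}=k_j$ precisely when \eqref{ElipticOperatorEq} is satisfied. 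Note also that $\varpi_{ij}$ is manifestly symmetric and $\bmh$-tracefree by construction.

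Next, I would establish existence of $\bm{s}$ by applying the Fredholm alternative to $\mathcal{L}$ on the compact manifold $\mathcal{S}$. The key structural input is that $\mathcal{L}$ is a formally self-adjoint second order elliptic operator: self-adjointness follows because $\bm{\delta}$ and $\bm{L}$ are formal $L^2$-adjoints on a compact Riemannian manifold without boundary, so
\[
\langle \bm{u},\mathcal{L}(\bm{s})\rangle=\langle \bm{u},(\bm{\delta}\circ \bm{L})(\bm{s})\rangle=\langle \bm{L}(\bm{u}),\bm{L}(\bm{s})\rangle=\langle \mathcal{L}(\bm{u}),\bm{s}\rangle.
\]
Ellipticity is standard for the conformal Killing/Laplacian operator on Riemannian 3-manifolds. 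By the same identity applied with $\bm{u}=\bm{v}$, any $\bm{v}\in \ker \mathcal{L}$ satisfies $\|\bm{L}(\bm{v})\|_{L^2}^2=0$, hence $\bm{L}(\bm{v})=0$; thus $\ker\mathcal{L}$ coincides precisely with the space of conformal Killing vectors of $(\mathcal{S},\bmh)$.

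With these ingredients in place, case (i) is immediate: in the absence of conformal Killing vectors, $\ker\mathcal{L}=\{0\}$, so Fact 1 guarantees a solution $\bm{s}$ of \eqref{ElipticOperatorEq} for arbitrary right-hand side, and the corresponding $\varpi_{ij}$ solves \eqref{DivergenceOmega}. For case (ii), the Fredholm obstruction \eqref{FredholmObstruction} must be verified, and since $\ker\mathcal{L}^{*}=\ker\mathcal{L}$ consists of conformal Killing vectors, it reads
\[
\int_{\mathcal{S}}v^{\bmi}\bigl(k_{\bmi}-D^{\bmj}\Psi'_{\bmi\bmj}\bigr)\,d\mu=0
\]
for every CKV $v^i$, which is exactly the stated condition; sufficiency follows as in (i), while necessity follows by pairing \eqref{ElipticOperatorEq} against a CKV $\bm{v}$ and using $\langle \bm{v},\mathcal{L}(\bm{s})\rangle=\langle \bm{L}(\bm{v}),\bm{L}(\bm{s})\rangle=0$.

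The main technical point that needs care is the precise functional-analytic setting: one must pick appropriate Sobolev (or H\"older) spaces on the compact 3-manifold $\mathcal{S}$ in which $\mathcal{L}$ is a Fredholm operator of index zero and in which the formal adjoint identities used above are rigorously justified via integration by parts without boundary terms. Given sufficient smoothness of the data $(h_{ij},\Psi'_{ij},f_i,\rho)$ asserted in the statement, elliptic regularity then upgrades any weak solution $\bm{s}$ to a smooth one, and hence produces a smooth $\varpi_{ij}$ as required.
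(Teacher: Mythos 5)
Your proposal is correct and follows essentially the same route as the paper: reduce to the elliptic equation \eqref{ElipticOperatorEq}, use $\mathcal{L}=\bm{\delta}\circ\bm{L}$ and the formal adjointness of $\bm{\delta}$ and $\bm{L}$ to show $\ker\mathcal{L}$ consists exactly of the conformal Killing vectors, and then invoke the Fredholm alternative. The only additions beyond the paper's argument are welcome ones --- you explicitly supply the necessity direction of (ii) by pairing against a conformal Killing vector, and you flag the functional-analytic setting needed to make the Fredholm and regularity steps rigorous.
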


\begin{remark}
{\em In the present context, the simplest example of a pair $(\mathcal{S},h_{ij})$ with conformal Killing vectors is the 3-sphere $\mathbb{S}^3$ with the round metric. In this case one has, in fact, the maximal number of conformal Killing vectors (10) for a 3-dimensioanl manifold.}
\end{remark}

\begin{remark}
{\em The freely specifiable data given by the tracefree tensor $\Psi'_{ij}$ can be thought of as describing some gravitational wave content.}
\end{remark}

\medskip
\noindent In order to construct initial data representing a collection of balls of dust, let $\Sigma_i$, $i=1,\ldots,n$ denote $n$ compact open subsets on $\mathcal{S}$ and consider a smooth non-negative scalar field $\rho$ over $\mathcal{S}$ with support on the union of the sets $\Sigma_i$. That is, we require that 
\begin{equation}
\label{BlubOfDustDef}
\begin{cases} 
      \rho > 0, & \rho \in \bigcup\limits_{i=1}^{n} \Sigma_{i}, \\
     \rho = 0, & \rho \in \mathcal{I}^+/\bigcup\limits_{i=1}^{n} \Sigma_{i}.
   \end{cases}
\end{equation}
Lemma \ref{Lemma:ExistenceAsymptoticData} gives the conditions for the existence of solution to the asymptotic constraint \eqref{DivergenceOmega} for this type of density profile $\rho$ and a given choice of metric $h_{ij}$ and fields $\Psi'_{ij}$ and $f_i$.

\begin{remark}
{\em Let $\rho$ be a smooth non-negative scalar field given by $\eqref{BlubOfDustDef}$. Furthermore, let $\bm\Psi'$ be a symmetric, $\bmh$-tracefree spatial tensor field, $\bm h$ the projector metric and $\bm f$ a one form, then we say that
$\left(\rho, \bmh, \bm\Psi', \bmf \right)$ is \textbf{n-body dust asymptotic data} if either condition (i) or (ii) of Lemma \ref{Lemma:ExistenceAsymptoticData} holds on $\mathcal{I}^+$.}
\end{remark}

\subsection{Evolution of the asymptotic data}
The asymptotic data constructed in the previous subsection  can be readily combined with the conformal evolution equations of Section \ref{Section:EvolutionEquations} to obtain the asymptotic region of a spacetime with positive Cosmological constant containing a collection of $n$ balls of dust. The key observation here is that as we are working in the conformal picture, any interval of time of the conformal boundary represents an infinite time domain from the physical perspective. The existence result can be stared as follows:

\begin{theorem}
\label{theorem_dataOnScri}
Given a choice of asymptotic data  representing a collection of $n$ dust balls, there exists a time $\tau>0$ such that the conformal Einstein-$\lambda$-dust equations have a unique smooth solution on the slab $[0,\tau)\times\mathcal{S}$ associated to this data. This solution implies, in turn, a solution to the (physical) Einstein-$\lambda$-dust system on $(0,\tau)\times\mathcal{S}$ for which the hypersurface $\{0\}\times \mathcal{S}$ corresponds to the conformal boundary $\mathcal{I}^+$. 
\end{theorem}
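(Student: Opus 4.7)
My plan is to reduce the theorem to a standard local existence statement for symmetric hyperbolic systems, applied to the reduced conformal evolution system \eqref{ReducedEvolEq1}--\eqref{ReducedEvolEqLast}, using the asymptotic data from Lemma \ref{Lemma:ExistenceAsymptoticData} as Cauchy data on the hypersurface $\mathcal{S}$ identified with $\mathcal{I}^+$. The conclusion about the physical spacetime will then be extracted through the bridging Theorem at the end of Section \ref{secFriedrichReview}.

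First, I would assemble a complete initial data set for the full unknown $\mathbf{u} = (e_{\bmi}{}^\mu,\Gamma_{\bmk}{}^{\bmi}{}_{\bmj},f_{\bmd},\varsigma_{\bmi\bmj},\xi,\Omega,\Sigma_{\bmd},s,L_{\bm0\bmi},L_{\bmi\bmj},\rho,\omega_{\bmi\bmj},\omega^{\ast}{}_{\bmi\bmj})$ on $\mathcal{S}$. The $n$-body dust asymptotic data $(\rho,h_{ij},\Psi'_{ij},f_i)$ of Lemma \ref{Lemma:ExistenceAsymptoticData} directly supplies $\rho$, $f_{\bmi}$, and (via the orthonormal frame of $\bmh$) the spatial frame $e_{\bmi}{}^\mu$ together with its Levi-Civita coefficients $\Gamma_{\bmk}{}^{\bmi}{}_{\bmj}$. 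The gauge choices listed at the start of Section \ref{secPastStability}, namely $s=0$, $\xi_{\bmi\bmj}=0$, $\nabla_{\bma}\nabla_{\bmb}\Omega=0$, $L_{\bm0\bmi}=0$, together with $\Omega|_{\mathcal{S}}=0$ (so that $\mathcal{S}\cong\mathcal{I}^+$), fix $\Omega$, $\Sigma_{\bmd}$, $s$, $\xi$ and $L_{\bm0\bmi}$. The electric Weyl component $\omega_{\bmi\bmj}$ is taken to be the tracefree tensor $\varpi_{\bmi\bmj}$ constructed in \eqref{DefinitionNewVariable}, which by Lemma \ref{Lemma:ExistenceAsymptoticData} solves the asymptotic constraint \eqref{DivergenceOmega}. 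The magnetic part $\omega^{\ast}{}_{\bmi\bmj}$ and the spatial Schouten $L_{\bmi\bmj}$ are then determined from $\bmh$ and the remaining constraint equations of \cite{Friedrich2017}, and $f_{\bm0}$ is read off from the compatibility $L_{\bm0\bm0}$-equation together with \eqref{TraceOfL}. This produces a smooth initial data set on $\mathcal{S}$ that satisfies \emph{all} constraints associated with \eqref{ConformalEq1}--\eqref{ConformalMatterEq2}.

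Second, I would invoke the general theory of quasi-linear symmetric hyperbolic systems, in the form of Kato's theorem (or the Cauchy-Kowalewski-type statement in Section 12 of \cite{Kroon2016}). Because the system \eqref{ReducedEvolEq1}--\eqref{ReducedEvolEqLast} is manifestly symmetric hyperbolic with smooth coefficients in $\mathbf{u}$ (the regularisation \eqref{RegularisingEq} removed the apparent singularity at $\Omega=0$, so the principal part and source terms are smooth in a neighbourhood of our initial data) and $\mathcal{S}$ is compact, there exists $\tau>0$ and a unique smooth solution $\mathbf{u}$ on the slab $[0,\tau)\times\mathcal{S}$ attaining the prescribed initial data. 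Here $\tau$ depends only on suitable Sobolev norms of the initial data, which are finite by the smoothness of the constructed asymptotic data.

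Third, I would verify that the constraints propagate. This is precisely the content of the subsidiary system argument of \cite{Friedrich2017}: differentiating the constraints along the flow and using the reduced evolution equations yields a linear homogeneous symmetric hyperbolic system for the constraint quantities. Since these vanish on $\mathcal{S}$ by construction of the initial data, uniqueness for that subsidiary system forces them to vanish on all of $[0,\tau)\times\mathcal{S}$. Hence $\mathbf{u}$ is a bona-fide solution of the full set of conformal equations \eqref{ConformalEq1}--\eqref{ConformalMatterEq2}. Applying the Theorem stated at the end of Section \ref{secFriedrichReview} then gives, on the open set $\{\Omega>0\}=(0,\tau)\times\mathcal{S}$, a solution of the physical Einstein-$\lambda$-dust system \eqref{PhysicalEFE}--\eqref{PhysicalMatterEq2}, while $\{0\}\times\mathcal{S}$ is, by construction of $\Omega$, the conformal boundary $\mathcal{I}^+$.

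The only genuinely delicate point I expect is not the existence step but bookkeeping at $\Omega=0$: one must check that the regularising substitution \eqref{RegularisingEq} has indeed rendered \eqref{ConformalEq5} and \eqref{ConformalMatterEq1} smooth \emph{at} the initial slice, so that Kato's theorem applies there rather than only in the interior. This is where the Lagrangian/Fermi gauge $e_{\bm0}=\bmU$, $\Gamma_{\bm0}{}^{\bma}{}_{\bmb}=0$ and the identity \eqref{xiIdentity} enter crucially, and where one must check that $\nabla_{\bmU}\Omega\neq 0$ on $\mathcal{S}$ so that the factors $(\nabla_{\bmU}\Omega)^{-1}$ appearing in the evolution equations are regular; this is guaranteed by the gauge choice $\Sigma_{\bm0}|_{\mathcal{S}}\neq 0$ inherited from the defining-function normalisation of $\Omega$ near $\mathcal{I}^+$. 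Once this is verified, the remainder of the argument is a routine application of the symmetric hyperbolic machinery.
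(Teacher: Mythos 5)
Your proposal is correct and follows essentially the same route the paper relies on: the paper states Theorem \ref{theorem_dataOnScri} without a written proof, treating it as a direct consequence of Lemmas \ref{lemma1} and \ref{Lemma:ExistenceAsymptoticData} together with the symmetric hyperbolic reduction, propagation of constraints, and regularity at $\Omega=0$ established in \cite{Friedrich2017}. Your three-step argument (data assembly, Kato's local existence on the compact slice, constraint propagation plus the bridging theorem), including the observation that $\nabla_{\bm U}\Omega\neq 0$ on $\mathcal{I}^+$ keeps the $(\nabla_{\bm U}\Omega)^{-1}$ factors regular, is exactly the content the paper leaves implicit.
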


%%%%%%%%%%%%%%%%%%%%%%%%%%%% Revised up to here 9.1.2021 %%%%%

%\noindent 
%\begin{theorem}
%In the regions where $\rho = 0$ and with the choice $\bm\Psi' =0$ one has a solution to the constraint on $\mathcal{I}^+$ given the existence of a solution $\bmv$ to $\mathcal{\bm L}(\bmv) = 0$. In the regions where $\rho > 0 $, the existence of a killing vector field $v_{\bmi}$ is sufficient for the existence of a solution to the constraint equation on $\mathcal{I}^+$. Furthermore, the conformal Einstein equations in the gauge given in this paper, satisfy the form of a system of symmetric hyperbolic evolution equations --- i.e. the principal matrix $\bmA^{0}$ is positive definite on $\mathcal{I}^+$. Thus, there exists a $\bm\tau >0 $ such that $\bmA^{0}$ is also positive definite in $[0,\bm\tau] \times \mathcal{I}^+$ and $\Sigma_i \cap \Sigma_{j} = \emptyset$. But since $\bm\tau$ represents small conformal time, the above proves stability of the solution for an infinite physical time.
%\end{theorem}

\begin{remark}
{\em By restricting the existence time further, if necessary, it is possible to ensure that the congruence of conformal geodesics on which our gauge is based remains non-intersecting for the interval $[0,\tau]$. This, in turn, ensures that dust balls in the initial asymptotic configuration do not intersect each other in the past. }
\end{remark}

\noindent In terms of physics, theorem \ref{theorem_dataOnScri}, suggest the following. If, in the infinite far future of an expanding universe, one is given a matter distribution representing patches of dust balls, then one can evolve this system backward in time for as long as one wish, and still have that the patches of dust remain non-interacting.

%%%%%%%%SOLVING EINSTEIN CONSTRAINTS%%%%%%%%%%%%%%%%%%%%%
\section{Forward evolution of dust balls}
\label{sec:futureStability}

In this section we consider the more physically realistic setting of the evolution of dust balls from a standard Cauchy hypersurface in a spacetime with positive Cosmological constant. Our strategy is to consider this setting as a perturbation of the de Sitter spacetime in order to make a statement of the future global existence of the dust balls. As in the case of the backwards evolution we start by constructing suitable initial data.

\subsection{Standard Cauchy initial data for self-gravitating dust balls}

Let $\tau \in \mathcal{\tilde{M}}$ be a positive function such that for $t \in \mathds{R}$, $\tau(p) = t$ gives the level surfaces $\tilde{\mathcal{S}}_t$. We denote by $\tilde{\mathcal{S}}_\star  \subset \mathcal{\tilde{M}}$ the hypersurface which coincides with the level surface $\tau(p) = 0$, and interpret this as an initial hypersurface at some fiduciary time.
\emph{Einstein constraints} on $\tilde{\mathcal{S}}_\star$ are given by
\begin{subequations}
\begin{eqnarray}
&& r[\tilde{\bmh}] + \tilde{K}^2 - \tilde{K}_{ij}\tilde{K}^{ij} = 2\left( \tilde{\rho} - \lambda\right),\label{EinsteinConstraintEq1}\\
&& \tilde{D}^i\tilde{K}_{ij} - \tilde{D}_j \tilde{K} = - \tilde{j}_j,\label{EinsteinConstraintEq2}
\end{eqnarray}
\end{subequations}
where $\tilde{\bmh}$ and $\tilde{K}_{ij}$ denote, respectively, the intrinsic metric and extrinsic curvature of $\tilde{\mathcal{S}}_\star$, $\tilde{D}$ is the Levi-Civita connection of  metric $\tilde{\bmh}$ and $r[\tilde{\bmh}]$ its Ricci scalar. Moreover, $\tilde{\rho}$ and $\tilde{\bmj}$ denote, respectively, the \emph{energy-density} and \emph{flux current} of the matter content.

\medskip
The constraints \eqref{EinsteinConstraintEq1} and \eqref{EinsteinConstraintEq2} will be solved using the \emph{conformal method of Licnerowicz-York} ---see e.g. \cite{Choquet-Bruhat2009}. Following the discussion in \cite{Kroon2016}, Chapter 11, let $h_{ij}=\Omega^2 \tilde{h}_{ij}$. Implementing this rescaling in equations \eqref{EinsteinConstraintEq1}-\eqref{EinsteinConstraintEq2} leads to 
\begin{subequations}
\begin{eqnarray}
&&2\Omega D_iD^i\Omega - 3D_i\Omega D^i\Omega + \frac{1}{2}\Omega^2 r + 3\Sigma^2+\frac{1}{2}\Omega^2\left(K^2 - K_{ij}K^{ij}\right) -2 \Omega \Sigma K = \Omega^4\rho- \lambda , \label{ConfEinsteinConstraintEq1}\\
&&\Omega^3 D^i K_{ij} - 2K^i{}_j D_i\Omega - \Omega D_k K + 2D_k\Sigma = \Omega^3 j_k,\label{ConfEinsteinConstraintEq2},
\end{eqnarray}
\end{subequations}
where
\[
\rho \equiv \Omega^{-4}\tilde{\rho}, \qquad j_k \equiv \Omega^{-3}\tilde{j}_k, \qquad \Sigma \equiv \nu^i D_i\Xi,
\]
and $\Omega = \Xi|_{\tilde{\mathcal{S}}_\star}$. Now, by setting $\Omega = \theta^{-2}$, equation \eqref{ConfEinsteinConstraintEq1} leads to the manifestly elliptic equation 
\begin{equation}
\label{ConfEllipticEq1}
L_{\bmh}\theta = \frac{1}{8}\theta\left(K^2 - K_{\bmi\bmj}K^{\bmi\bmj}\right) - \frac{1}{2}\theta^2 \Sigma K - \frac{1}{4}\theta^5\left(\theta^{-8}\rho-\lambda \right),
\end{equation}
where we have defined the \textit{Yamabe operator}
\[
L_{\bmh}\theta \equiv D_i D^i\theta - \frac{1}{8} r[\bmh] \theta.
\]
Now, defining 
\[
\psi_{ij} \equiv \theta^{4}K_{\{ij\}}, \qquad K_{\{ij\}} = K_{j} - \frac{1}{3}K h_{ij},
\]
it follows that equation \eqref{ConfEinsteinConstraintEq2} leads to the equation
\begin{equation}
\label{DivergencePsi}
D^{i}\psi_{ij} = \frac{2}{3}\theta^6 D_{j}\tilde{K} - 2\theta^{6}D_{j}\Sigma + j_{j}.
\end{equation}

\begin{remark}
{\em We will consider equations \eqref{ConfEllipticEq1} and \eqref{DivergencePsi} in the particular case that 
\[
K = \Sigma = 0.
\]
It can be readily verified that the above conditions imply that $\tilde{\mathcal{S}}_\star$ is a maximal hypersurface.}
\end{remark}

In order to put equation \eqref{DivergencePsi} in an elliptic form, we make use of the \textit{York splitting} ---i.e. given an arbitrary covector field $X_i$, we consider solutions $\psi_{ij}$ of the form
\begin{equation}
\label{YorkSplit}
\psi_{ij} = \left(\mathcal{L}_{\bmh}X\right)_{ij} + \psi'_{ij},
\end{equation}
where $\psi'_{ij}$ is a freely specifiable symmetric and tracefree tensor field, and $\mathcal{L}_{\bmh}{\bm X}$ is the \textit{conformal Killing operator} defined by
\[
\left(\mathcal{L}_{\bmh}X\right)_{ij} \equiv D_{i}X_{j} + D_{j}X_{i}- \frac{2}{3}h_{ij}D_{k}X^{k}. 
\]
For simplicity, we set $\psi'_{ij}=0$, so that substituting \eqref{YorkSplit} into equation \eqref{DivergencePsi}, we obtain the elliptic equation
\begin{equation}
\label{ConfEllipticEq2}
    D^{i}\left(\mathcal{L}_{\bmh}X\right)_{ij} = j_j.
\end{equation}

We thus seek to show that there exist a solution to the elliptic equations \eqref{ConfEllipticEq1} and \eqref{ConfEllipticEq2} which represents initial data for a de Sitter-like spacetime with an energy density function given by \eqref{BlubOfDustDef} ---so that it can be regarded as describing a collection of dust balls.

\medskip
With regards to the solution to equation \eqref{ConfEllipticEq2} we adapt the following result from  \cite{Choquet-Bruhat2009}, Chapter VII, Section 6: 

\begin{proposition}
\label{Proposition:York}
 Let $\bmh \in H^2(\tilde{\mathcal{S}}_\star)$ and $\bm\xi$ be, respectively, a Riemannian metric and a conformal Killing vector over $\tilde{\mathcal{S}}_\star$. Then equation \eqref{ConfEllipticEq2} has a solution $\bm X \in H^2(\tilde{\mathcal{S}}_\star)$ if $\bm j \in L^2(\tilde{\mathcal{S}}_\star)$ and  
\[
\int_{\tilde{\mathcal{S}}_\star}
\bmh^\sharp( \bmj,\bm\xi)
 \bm\epsilon_{\bmh} = 0.
\]
The solution is determined up to the addition of a conformal Killing vector. Furthermore, the solution is unique if one imposes
\[
\int_{\tilde{\mathcal{S}}_\star} \langle {\bm X},\bm\xi\rangle \bm\epsilon_{\bmh} = 0.
\]
In that case there exists a positive constant $C$ such that 
\[
\norm{\bm X}_{L^2}^2 \leq C \norm{\bmj}_{L^2}^2.
\]
\end{proposition}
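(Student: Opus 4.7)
The plan is to recognise the operator $\bm P \equiv \bm\delta \circ \mathcal{L}_{\bmh}$, acting on a covector field by $\bm P(\bm X)_j = D^i(\mathcal{L}_\bmh X)_{ij}$, as a second-order, formally self-adjoint, elliptic operator on the compact manifold $\tilde{\mathcal{S}}_\star$ and to invoke the Fredholm alternative, in complete analogy with the treatment of the asymptotic constraint in Section \ref{secPastStability}. Expanding the divergence, the principal part of $\bm P$ has symbol $X_j \mapsto |\xi|^2 X_j + \tfrac{1}{3}\xi_j \xi^k X_k$, which for any nonzero covector $\xi$ is a positive-definite quadratic form in $\bm X$, so $\bm P$ is elliptic. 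Integration by parts, exploiting the symmetry and $\bmh$-tracelessness of $\mathcal{L}_{\bmh}\bm X$, yields the fundamental identity
\[
\langle \bm Y, \bm P(\bm X)\rangle_{L^2} \;=\; -\tfrac{1}{2}\langle \mathcal{L}_{\bmh}\bm Y, \mathcal{L}_{\bmh}\bm X\rangle_{L^2},
\]
which simultaneously exhibits formal self-adjointness of $\bm P$ and identifies $\ker \bm P$ with the (finite-dimensional) space of conformal Killing vectors of $\bmh$.

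Next I would apply the standard Fredholm alternative for self-adjoint elliptic operators on a compact manifold: the equation $\bm P(\bm X) = \bmj$ admits a solution if and only if its right-hand side is $L^2$-orthogonal to $\ker \bm P^{\ast} = \ker \bm P$. Letting $\bm\xi$ range over the conformal Killing vectors, this is precisely the stated obstruction $\int_{\tilde{\mathcal{S}}_\star} \bmh^\sharp(\bmj,\bm\xi)\,\bm\epsilon_{\bmh} = 0$. Any two solutions differ by an element of the kernel, so imposing in addition the orthogonality requirement $\int_{\tilde{\mathcal{S}}_\star} \langle \bm X, \bm\xi \rangle\,\bm\epsilon_{\bmh} = 0$ for every conformal Killing vector selects the unique solution lying in the $L^2$-orthogonal complement of $\ker \bm P$.

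For the estimate I would observe that the restriction of $\bm P$ to the $L^2$-orthogonal complement of its finite-dimensional kernel is a continuous bijection onto the corresponding complement inside $L^2$, so the bounded inverse theorem supplies a continuous inverse; combined with the standard elliptic regularity estimate of the form $\|\bm X\|_{H^2} \leq C(\|\bm P(\bm X)\|_{L^2} + \|\bm X\|_{L^2})$, this immediately yields the claimed bound $\|\bm X\|_{L^2}^2 \leq C\|\bmj\|_{L^2}^2$ for the distinguished solution. The main obstacle I anticipate is not any of the abstract functional-analytic steps, which are completely standard, but the regularity bookkeeping: since $\bmh$ is assumed only to lie in $H^2(\tilde{\mathcal{S}}_\star)$ --- which in dimension three embeds continuously into $C^0$ but not into $C^2$ --- one must invoke a version of elliptic regularity tailored to operators with merely Sobolev coefficients (as in \cite{Choquet-Bruhat2009}) in order to upgrade the weak solution produced by duality to a genuine element of $H^2$.
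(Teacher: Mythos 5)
Your argument is correct and follows essentially the same route as the paper, which obtains this proposition via the Fredholm alternative for the formally self-adjoint elliptic operator $\bm\delta\circ\mathcal{L}_{\bmh}$ --- the paper itself simply adapts the statement from \cite{Choquet-Bruhat2009} and runs the identical kernel/cokernel argument for the analogous asymptotic constraint in Section \ref{secPastStability}. Your closing remark about needing elliptic theory for coefficients of merely $H^2$ regularity is a fair point that the paper leaves implicit in the citation.
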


Now, setting $K = \Sigma = 0$ and the using tracefree tensor $\psi_{ij}$ defined in equation \eqref{YorkSplit}, the Licnerowicz equation \eqref{ConfEllipticEq1} can be written as
\begin{equation}
\label{LichnerowiczEq}
D_{i}D^{i}\theta - a\theta + b\theta^{-7}  + c\theta^5 =0,
\end{equation}
where,
\[
a \equiv \frac{1}{8}\bm r[\bmh], \qquad b \equiv \frac{1}{8} \psi_{ij}\psi^{ij}, \qquad c \equiv \frac{1}{4}\left( \tilde{\rho}- \lambda\right), \quad \tilde{\rho} = \Omega^4 \rho = \theta^{-8}\rho.
\]
Following the theory developed in \cite{Choquet-Bruhat2009} Chapter VII, Sections 5, 6 and 7  (see also \cite{Choquet-Bruhat1976}) the above equation has a unique solution $\theta > 0$ if  $b \geq 0$ and $c < 0$. Since one readily has that $\psi_{ij}\psi^{ij} > 0$, the only condition to be imposed is
\[
\tilde{\rho} < \lambda.
\]
Thus, one has the following
\begin{proposition}
\label{lemmaLichnerowiczSoln}
For $r[\bmh]>0$, $\psi_{ij}\psi^{ij}>0$ and $\lambda >0$, the condition $\tilde{\rho} < \lambda$ is a sufficient condition for the existence of a unique solution $\theta$ to the Lichnerowicz equation \eqref{LichnerowiczEq}.
\end{proposition}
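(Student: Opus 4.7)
My plan is to obtain this proposition as a direct consequence of the standard existence and uniqueness theory for the Lichnerowicz equation on a closed Riemannian 3--manifold developed in \cite{Choquet-Bruhat2009}, for which the hypotheses reduce to sign conditions on the coefficients $a$, $b$, $c$ in \eqref{LichnerowiczEq}. Thus the bulk of the work is simply to read those signs off from our assumptions. We have
\[
a=\tfrac{1}{8}r[\bmh]>0,\qquad b=\tfrac{1}{8}\psi_{ij}\psi^{ij}>0,\qquad c=\tfrac{1}{4}(\tilde{\rho}-\lambda)<0,
\]
the three inequalities following respectively from the hypothesis $r[\bmh]>0$, from the fact that $\psi_{ij}\psi^{ij}$ is the squared norm of a real symmetric spatial tensor (strictly positive by assumption), and from $\tilde{\rho}<\lambda$ together with $\lambda>0$.

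To make the appeal to \cite{Choquet-Bruhat2009} concrete, I would sketch the proof via the method of sub- and super-solutions. Rewrite \eqref{LichnerowiczEq} as
\[
\Delta_{\bmh}\theta=F(\theta)\equiv a\theta-b\theta^{-7}-c\theta^{5}.
\]
A short computation gives $F'(\theta)=a+7b\theta^{-8}-5c\theta^{4}>0$ under the sign conditions above, so $F$ is strictly increasing on $(0,\infty)$. Moreover $F(\theta)\to+\infty$ as $\theta\to+\infty$ (from the $-c\theta^{5}$ term, whose coefficient is positive) and $F(\theta)\to-\infty$ as $\theta\to 0^{+}$ (from $-b\theta^{-7}$, which is crucially nonzero \emph{even where $\rho=0$}, a point that matters for the patchy density profile \eqref{BlubOfDustDef}). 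Consequently a sufficiently large positive constant is a super-solution on $\tilde{\mathcal{S}}_\star$ and a sufficiently small positive constant is a sub-solution. Monotone iteration on the compact manifold produces a solution $\theta$ sandwiched between them, and elliptic regularity upgrades it to a smooth positive function. Uniqueness is then a one-line maximum-principle argument from the strict monotonicity of $F$: at a point of positive maximum of $\theta_{1}-\theta_{2}$ for two solutions one has $\Delta_{\bmh}(\theta_{1}-\theta_{2})\le 0$, while $F(\theta_{1})-F(\theta_{2})>0$, a contradiction, and symmetrically.

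The only mildly delicate point is that $c$ itself depends on $\theta$ through $\tilde{\rho}=\theta^{-8}\rho$. Expanding $c\theta^{5}=\tfrac{1}{4}\rho\,\theta^{-3}-\tfrac{1}{4}\lambda\,\theta^{5}$ recasts the equation in the $\rho$-variable as
\[
\Delta_{\bmh}\theta=a\theta-b\theta^{-7}-\tfrac{1}{4}\rho\,\theta^{-3}+\tfrac{1}{4}\lambda\,\theta^{5},
\]
whose right-hand side has derivative $a+7b\theta^{-8}+\tfrac{3}{4}\rho\theta^{-4}+\tfrac{5}{4}\lambda\theta^{4}\ge 0$, so the strict monotonicity required for the sub/super-solution scheme is preserved. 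The hypothesis $\tilde{\rho}<\lambda$ enters precisely as the sign condition that makes the $\theta^{5}$ term provide coercivity at large $\theta$ rather than spoiling it. I do not anticipate any substantive obstacle here --- the proposition is essentially a translation between the conformal ansatz used in Section \ref{sec:futureStability} and the classical Lichnerowicz theorem.
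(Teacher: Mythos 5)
Your proof is correct and follows essentially the same route as the paper, which likewise reduces the proposition to the sign conditions $b>0$, $c<0$ and invokes the classical sub-/super-solution theory for the Lichnerowicz equation from \cite{Choquet-Bruhat2009}. Your additional observation that $c$ depends on the unknown through $\tilde{\rho}=\theta^{-8}\rho$, and that the monotonicity and coercivity needed for the scheme survive the rewriting in the $\rho$-variable, addresses a point that the paper's one-line argument silently glosses over.
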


Together, Propositions \ref{Proposition:York} and \ref{lemmaLichnerowiczSoln} ensure the existence of a large class of solutions to the Einstein constraint equations representing an arbitrary configuration of dust balls at a some fiduciary time. For this, as in the asymptotic problem, one chooses the density $\tilde{\rho}$ as in eqution \eqref{BlubOfDustDef} ---the method for the construction of solutions to the Einstein constraints described above works irrespectively from the fact that the density is only non-zero on a finite number of subsets of $\tilde{\mathcal{S}}_\star$. If, in addition, one chooses the metric $\bmh$ as a constant multiple of the round metric on $\mathbb{S}^3$ ---as in the case of the de Sitter spacetime--- one can then regard the dust balls as matter-sourced perturbation of the de Sitter spacetime. The size of $\tilde{\rho}$ as described in terms of Sobolev norms controls the closeness of $\theta$ to the value $1$ (the de Sitter value). This observation is of importance in the discussion of the stability of solutions to the evolution problem.

\begin{remark}
\label{Remark:ParticulariseData}
{\em For the purpose of simplicity of presentation of the subsequent discussion it is convenient to consider a setting in which the initial current vector $\bmj$ vanishes. This choice of free data is consistent with the 4-velocity $\bmu$ being orthogonal to the initial hypersurface $\tilde{\mathcal{S}}_\star$. This choice is made throughout the whole hypersurface regardless of whether the density vanishes or not in a given region. For this choice, if the density vanishes all over the initial hypersurface, then one obtains \emph{trivial data} corresponding to the de Sitter spacetime. }
\end{remark}

\medskip
Following the discussion in \cite{Kroon2016} Chapter 11, from a solution to the Einstein constraint equations it is possible to obtain a solution to the conformal Einstein field equations by algebraic manipulations and differentiation. The deviation of this data from (vacuum) data for the de Sitter spacetime is controlled by the size of the current $\bmj$ and the density $\tilde{\rho}$. 

\subsection{Long time evolution}

In this section we discuss the evolution of the initial data given by Propositions \ref{Proposition:York} and \ref{lemmaLichnerowiczSoln}. In particular, we discuss how the ideas used in the stability of the de Sitter spacetime \cite{friedrich1986a} (see also \cite{Kroon2016}, Chapter 15) can be used to obtain a future global existence statement for the dust balls if the initial density is sufficiently small.

\medskip
In the following let $\mathbf{u}$ denote a solution to the conformal evolution equations discussed in Section \ref{Section:EvolutionEquations}. Moreover, let $\mathring{\mathbf{u}}$ denote the solution to these evolution equations with $\rho=0$ (i.e. vanishing density) and the 4-velocity $u^a$ chosen so that it is tangent to timelike geodesics in the physical spacetime ---see Remark \ref{Remark:ParticulariseData}.  Denote by $\mathbf{u}_\star$ and $\mathring{\mathbf{u}}_\star$ the associated initial data on some fiduciary initial hypersurface $\mathcal{S}_\star$. The solution $\mathbf{u}$ provides a conformal representation of the de Sitter spacetime which is smooth up to and beyond the conformal boundary $\mathcal{I}^+$. In particular, it has vanishing rescaled Weyl tensor. For concreteness assume that the conformal boundary for this (background) solution is given by the condition $\tau=\tau_\infty$, for $\tau_\infty$ some constant. To this background solution one can readily apply the standard theory of stability for symmetric hyperbolic equations ---see \cite{Kat75}; also \cite{Kroon2016}--- to ensure the existence of nearby solutions (in the sense of Sobolev spaces) to the evolution equations with a similar existence time. Accordingly, these solutions extend up to and beyond the conformal boundary. This amounts to a future global existence result. More precisely, one has the following:

\begin{theorem}
\label{FutureGlobalExistenceDustBalls}
Let $\mathbf{u}_\star$ denote smooth initial data for the conformal-$\lambda$-dust evolution equations on a compact manifold $\mathcal{S}_\star$ describing a configuration of dust balls as given by Propositions \ref{Proposition:York} and \ref{lemmaLichnerowiczSoln}. There exists $\varepsilon>0$ such that for any initial data $\mathbf{u}_\star$ such that
\[
\norm{\mathbf{u}_\star -\mathring{\mathbf{u}}_\star}_m <\varepsilon, \qquad m\geq 5,
\]
 there exists a smooth solution $\mathbf{u}$ to the conformal evolution equations over the domain
\[
\mathcal{M}\equiv [\tau_\star,\tau_\infty] \times \mathcal{S},
\]
$\mathcal{S}\approx \mathcal{S}_\star$. Moreover, given a sequence of initial data $\mathbf{u}^{(n)}_\star$, as above, such that
\[
\norm{\mathbf{u}_\star -\mathring{\mathbf{u}}_\star}_{m}\rightarrow 0, \qquad \mbox{as} \quad n\rightarrow \infty,
\]
one has that the corresponding solutions satisfy 
\[
\norm{\mathbf{u}(\tau,\cdot) -\mathring{\mathbf{u}}(\tau,\cdot)}_{m}\rightarrow 0, \qquad \mbox{as} \quad n\rightarrow \infty.
\]
The solution $\mathbf{u}$ implies, in turn, a future geodesically complete solution to the (physical) Einstein-$\lambda$-dust system for which $\mathcal{I}^+$ corresponds to future (timelike) infinity.
\end{theorem}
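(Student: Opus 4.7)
The plan is to view the problem as a small perturbation of the de Sitter background in the conformal picture and to apply Cauchy stability for quasilinear symmetric hyperbolic systems. First I would observe that the reference solution $\mathring{\mathbf{u}}$ (vanishing density with $\bmu$ tangent to timelike geodesics, as in Remark \ref{Remark:ParticulariseData}) is a smooth conformal representation of de Sitter which extends smoothly across $\mathcal{I}^+$; in particular it exists on a slightly enlarged slab $[\tau_\star,\tau_\infty+\delta]\times \mathcal{S}$ with $\mathring{\Omega}>0$ on $[\tau_\star,\tau_\infty)$, $\mathring{\Omega}=0$ and $\nabla_{\bm0}\mathring{\Omega}\neq 0$ on $\{\tau_\infty\}\times\mathcal{S}$. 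Crucially, by virtue of the regularising relation \eqref{RegularisingEq} the evolution system \eqref{ReducedEvolEq1}--\eqref{ReducedEvolEqLast} is quasilinear symmetric hyperbolic with coefficients that are smooth functions of $\mathbf{u}$ on an open neighbourhood of $\mathring{\mathbf{u}}$, even in a neighbourhood of the conformal boundary.

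Next I would invoke Kato's stability theorem \cite{Kat75} for such systems. Writing the reduced equations schematically as $A^0(\mathbf{u})\partial_\tau\mathbf{u}+A^i(\mathbf{u})\partial_i\mathbf{u}=B(\mathbf{u})$, the symmetric hyperbolic structure persists under $H^m$-small perturbations of the state, and the choice $m\geq 5$ guarantees the Sobolev embedding $H^m(\mathcal{S})\hookrightarrow C^3(\mathcal{S})$ required to close the standard Moser-type nonlinear estimates. Cauchy stability then produces $\varepsilon>0$ such that any initial data with $\norm{\mathbf{u}_\star-\mathring{\mathbf{u}}_\star}_m<\varepsilon$ yields a solution $\mathbf{u}\in C([\tau_\star,\tau_\infty];H^m(\mathcal{S}))$ defined on the full background existence interval, together with the asserted continuous dependence for sequences $\mathbf{u}^{(n)}_\star\to \mathring{\mathbf{u}}_\star$ in $H^m$. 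To promote this into a solution of the full conformal Einstein-$\lambda$-dust system I would appeal to the constraint propagation result of \cite{Friedrich2017} stated in Section \ref{secFriedrichReview}: the data constructed via Propositions \ref{Proposition:York} and \ref{lemmaLichnerowiczSoln} satisfy the conformal constraints on $\mathcal{S}_\star$ by construction, hence the constraints hold on the whole development.

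Finally, for the physical interpretation, in the region $\{\Omega>0\}$ the rescaling $\tilde{\bmg}=\Omega^{-2}\bmg$ produces a solution of the Einstein-$\lambda$-dust system. The open condition $\Omega=0$ with $\nabla_{\bm0}\Omega\neq 0$ at $\tau=\tau_\infty$ persists under a sufficiently small $H^m$ perturbation, so for $\varepsilon$ small the set $\{\tau_\infty\}\times\mathcal{S}$ is a smooth spacelike hypersurface on which $\Omega$ has a simple zero, i.e.\ it represents the physical $\mathcal{I}^+$. Since the conformal factor relates the physical and unphysical proper times along the conformal geodesics underlying the gauge, reaching $\mathcal{I}^+$ in finite $\tau$ corresponds to infinite physical proper time, which yields future geodesic completeness and identifies $\mathcal{I}^+$ with future timelike infinity. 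The main obstacle I expect is ensuring that all the perturbative estimates remain uniform up to the conformal boundary; this is precisely what the regularisation \eqref{RegularisingEq} is designed to handle, removing every $\Omega^{-1}$ from the principal part and the source terms so that Kato's theorem applies on the closed interval $[\tau_\star,\tau_\infty]$ without degeneracy at the endpoint.
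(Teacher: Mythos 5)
Your proposal is correct and follows essentially the same route as the paper: a perturbation of the de Sitter background solution $\mathring{\mathbf{u}}$, Kato's theory \cite{Kat75} and Cauchy stability for the symmetric hyperbolic reduced system to extend the existence interval up to and beyond $\tau_\infty$, and propagation of the constraints to recover the physical Einstein-$\lambda$-dust solution. The only point where you go slightly beyond the paper's sketch is geodesic completeness, which you argue via the conformal factor along the conformal geodesics, whereas the paper defers this to a remark invoking \cite{ChoCot02}; both are acceptable.
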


\begin{proof}
The proof of this result follows the same structure of that of the stability of the de Sitter spacetime \cite{friedrich1986a,Friedrich1986b} ---see also \cite{Kroon2016}, Chapter 15. Here we provide a brief outline of the main ideas. As already mentioned, the evolution equations \eqref{ReducedEvolEq1}-\eqref{ReducedEvolEqLast} implies a symmetric hyperbolic evolution system for the components of the vector unknown $\mathbf{u}$. Now writing $\mathbf{u}=\mathring{\mathbf{u}}+\breve{\mathbf{u}}$ where $\mathring{\mathbf{u}}$ denotes the \emph{background} de Sitter solution, it follows that the \emph{perturbation} $\breve{\mathbf{u}}$ also satisfies a symmetric hyperbolic evolution system. Existence of solutions for this system follows from the theory developed in \cite{Kat75}. Moreover, as the perturbed initial data $\breve{\mathbf{u}}_\star$ is small (in the sense of Sobolev spaces), it follows then from Cauchy stability that its existence interval includes the time $\tau_\infty$ ---so that the development includes the conformal factor. Finally, a \emph{propagation of the constraints} argument ensures the the solution to the reduced evolution system implies a solution to the physical Einstein-$\lambda$-dust system.
\end{proof}

\begin{remark}
{\em From the discussion leading to Propositions \ref{Proposition:York} and \ref{lemmaLichnerowiczSoln}, it follows that the size (in the Sobolev norm) of the initial data $\mathbf{u}_\star$ is controlled by the initial value of the density over $\mathcal{S}_\star$. In particular, if $\rho_\star=0$ then $\mathbf{u}_\star=\mathring{\mathbf{u}}_\star$. Accordingly, Theorem \ref{FutureGlobalExistenceDustBalls} states that the initial configuration of dust balls will exist globally into the future if the density is sufficiently small ---that is, if the dust making up the balls is sufficiently diluted.
 }
\end{remark}

\begin{remark}
{\em The spacetimes arising from Theorem \ref{FutureGlobalExistenceDustBalls} can be readily shown to be geodesically complete. The simplest manner of doing this is to make use of the theory developed in \cite{ChoCot02}. The required estimates needed to establish geodesic completeness follow from the closeness (in the sense of Sobolev spaces) of the solution provided by Theorem \ref{FutureGlobalExistenceDustBalls} and the background exact de Sitter solution. In the present case it is possible to show even more: as the background 4-velocity $\mathring{u}^a$ is chosen to be tangent to a congruence of non-intersecting conformal geodesics, it follows that if the perturbed solutions given by Theorem \ref{FutureGlobalExistenceDustBalls} are the flow lines of $u^a$, then they are also non-intersecting. This observation shows, in addition, that the various members of an arbitrary configuration of dust balls never intersect in the future.}
\end{remark}

%The smallness condition for $\tilde{\rho}$ amounts to the requirement that 
%\[
%\int_{\mathds{R}^3} \left(\tilde{\rho}^2 - D_i D^i \tilde{\rho}\right)d^3x < \epsilon^2.
%\]

%\begin{remark}
%\em Note that we are using geometrised units --- i.e. $c = G = 1$. Observe that if we let $\tilde{\rho} = const.$ then $D_i D^i \tilde{\rho} =0$ and a straight forward calculation shows that 
%\[
%\tilde{\rho}^2 V < \epsilon^2,
%\]
%where $V$ is the volume of integration. 
%\end{remark}

\label{Section:FinalRemarks}
The purpose of this article is the development of a model of self-gravitating bodies in General Relativity for which it is possible to make statements of long-term existence. As mentioned in the introduction, the well-posedness and local existence in time of self-gravitating balls of dust has been given in \cite{Choquet-Bruhat2006}. These self-gravitating bodies possess a smooth boundary (in the sense that the density is assumed to go to zero smoothly). This observation, combined with an evolution law for the 4-velocity which is well defined even in the regions where the density vanishes allows to obtain a suitable evolution system for which existence theory is available. The analysis of the Einstein-$\lambda$-dust system in \cite{Friedrich2017} provides a conformal analogue to this system and thus, it allows to implement an argument establishing long-term existence of dust ball configurations. The physical mechanism making it possible to run this argument is the \emph{acceleration} provided by the Cosmological constant $\lambda$. 
\medskip
It should be mentioned that an extension of this result to an asymptotically flat setting (where $\lambda=0$) is made much more challenging by the fact that in this scenario, and following a conformal point of view, timelike geodesics converge at future timelike infinity $i^+$. Accordingly, any attempt to analyse the long-term existence of matter configurations is tied to the development of a suitable description of this asymptotic point. 

%We have given sufficient conditions for the initial data of the Einstein dust system with a positive cosmological constant to be future stable if prescribed on an initial hyper surface. Moreover, we show that one can prescribe data representing patches of dust in an expanding universe on the conformal boundary. In this case the evolution will be past stable. 
%More precisely, We find that given the density of the system is smaller than the cosmological constant, and that it satisfies a smallness requirement, then it is allowed for the system to exists for all future time.
%This means, that in this \textit{toy model} some form of an "apparent steady state" universe is allowed: one can evolve the system to the infinite past and future such that they remain non-interacting. If the data is prescribed with sufficiently large distance between the galaxies, the evolution of the system may be made to be such that an observer on one of these galaxies will never see any other galaxies in the infinite past or future, and thus conclude that he is in a steady state universe. It is not surprising to find that one of the conditions for this to be the case, is that the density be small compared to the cosmological constant given that it is standard to interpret the cosmological constant as the source of expansion.  
\bibliography{References/references}
\bibliographystyle{abbrv}
 \end{document}